\documentclass[letterpaper, 10 pt, conference]{ieeeconf}  

\IEEEoverridecommandlockouts                              

\usepackage{graphics} 
\usepackage{epsfig} 
\usepackage{mathrsfs}
\usepackage{times} 
\usepackage{amsmath} 
\usepackage{amssymb}  
\usepackage{cite}
\usepackage{bm}
\usepackage{color}
\usepackage{xcolor}
\usepackage{framed}
\usepackage{dsfont}
\usepackage{arydshln}

\usepackage{algorithm}
\usepackage{algorithmicx}
\usepackage{algpseudocode}
\algrenewcommand{\algorithmiccomment}[1]{\hskip3em\% #1}

\usepackage[small]{caption}
\usepackage[subrefformat=parens]{subcaption}
\newtheorem{lemma}{Lemma}
\newtheorem{theorem}{Theorem}
\newtheorem{corollary}{Corollary}

\newtheorem{proposition}{Proposition}

\newcommand{\A}{\mathcal{A}}

\newcommand{\cC}{\mathcal{C}}

\newcommand{\f}{\mathsf{f}}

\newcommand{\I}{\mathcal{I}}

\newcommand{\cP}{\mathcal{P}}

\newcommand{\p}{\mathsf{p}}
\newcommand{\R}{\mathbb{R}}

\newcommand{\cS}{\mathcal{S}}

\newcommand{\U}{\mathcal{U}}

\newcommand{\W}{\mathcal{W}}

\newcommand{\Z}{\mathbb{Z}}

\newcommand{\col}{\mathrm{col}}

\newcommand{\im}{\mathrm{im\hspace{0.2ex}}}

\newcommand{\minimize}{\mathop{\rm minimize}\limits}

\title{\LARGE \bf
Data-Driven Control from Poisoned Data:\\Fundamental Limitations and Secure DeePC
}

\author{Takumi Shinohara, Henrik Sandberg, and Karl Henrik Johansson \\
\thanks{This work was supported in part by the Knut and Alice Wallenberg Foundation Wallenberg Scholar Grant, Swedish Research Council (Project 2023-04770), Swedish Research Council Distinguished Professor Grant (Project 2017-01078), Swedish Civil Defence and Resilience Agency (Project MAD-VAMCHS), and VINNOVA project ``Control-computing-communication co-design for autonomous industry (3C4AI)'' (Project 2025-01119).}
\thanks{The authors are with the Department of Decision and Control Systems, KTH Royal Institute of Technology, and also with Digital Futures, 100 44 Stockholm, Sweden. (e-mail: tashin@kth.se, hsan@kth.se, kallej@kth.se).}}

\begin{document}

\maketitle
\thispagestyle{empty}
\pagestyle{empty}

\begin{abstract}
We study a data-driven control problem in the presence of arbitrary data poisoning attacks. We assume that a subset of offline output data is stored in unprotected locations and may be poisoned by an adversary. We first establish fundamental limitations for data-driven control arising from such poisoned data: poisoning attacks are not detected/identified from the dataset alone; unprotected data are non-informative for controller design with worst-case guarantees; and hard constraints on unprotected outputs are not certifiable. Motivated by these limitations and the data-enabled predictive control (DeePC) technique, we propose \textit{Secure DeePC}, a data-driven control algorithm that is resilient against poisoning attacks. It first runs output-truncated DeePC using only the protected dataset until the online input becomes persistently exciting. It then uses online measurements to reconstruct the partial offline dataset, and finally returns to full-output DeePC. Secure DeePC achieves MPC-equivalent performance in finite time almost surely under certain conditions. Simulation results illustrate the efficacy of the proposed framework against poisoning attacks.
\end{abstract}


\section{Introduction}
\label{section:introduction}
Recent advances in data-driven methods have spurred significant interest in direct data-driven control \cite{2019ECCDorfler,2023TACDorfler,2021TACAllgower,2020TACTesi}, where controllers are synthesized directly from measured offline input-output (I/O) data of systems with unknown dynamics, thereby avoiding explicit model identification.
Among data-driven control schemes, data-enabled predictive control (DeePC) \cite{2019ECCDorfler,2023TACDorfler} has emerged as a prominent approach that leverages Willems' Fundamental Lemma to formulate a model predictive control (MPC) problem directly in terms of Hankel matrices constructed from data.

Most of these works implicitly assume that the collected data are genuine and have not been corrupted.
However, adversaries can manipulate the data by exploiting vulnerabilities in data storage or communication channels.
Indeed, some studies \cite{2021ACCProutiere,2023ACCTopcu,2023CDCSasahara} have shown that poisoning attacks can severely degrade performance and even destabilize data-driven controllers.
From the defender's viewpoint, \cite{2025LCSSShi,2025AutomaticaZhang} develop data-driven secure controllers that operate under intermittent loss of offline data induced by denial-of-service (DoS) attacks.
Yet, secure data-driven control against \textit{arbitrary data poisoning} has received limited attention.
Unlike DoS-induced packet losses, in which missing samples are explicitly identifiable and can be treated as erasure data, sophisticated poisoning attacks provide neither explicit evidence of an attack nor reliable indicators of which channels/samples are corrupted.
Such adversarial characteristics make data-driven control significantly more challenging.

\begin{figure}[t]
\begin{center}
\includegraphics[width=\linewidth]{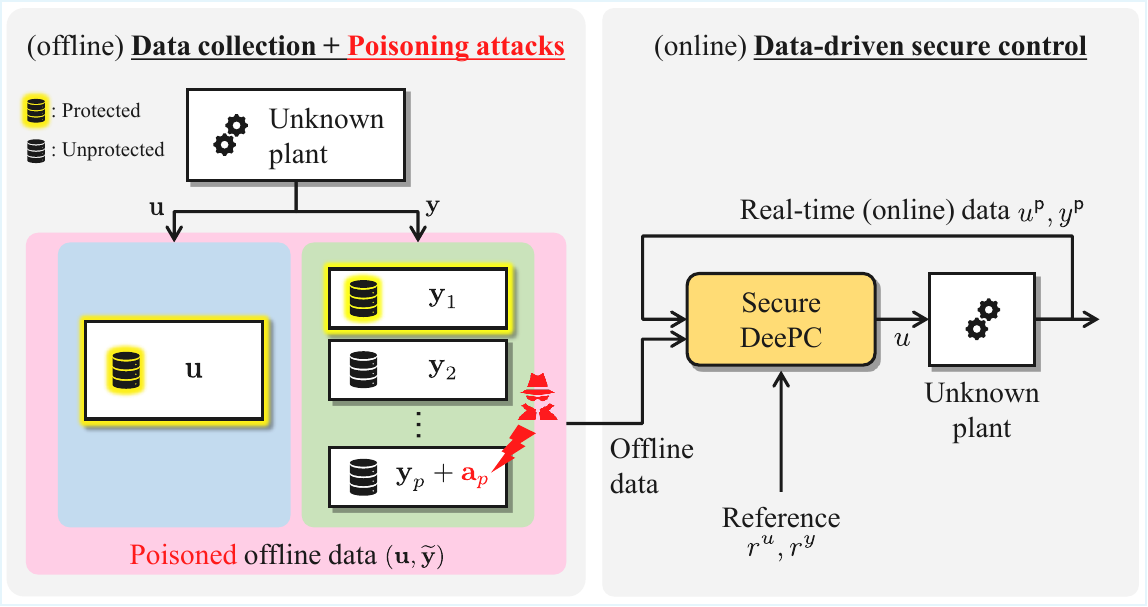}
\vspace{-5.5mm}
\caption{An overview of data poisoning attacks and Secure DeePC.}
\label{fig:illustration}
\vspace{-6.5mm}
\end{center}
\end{figure}

In this paper, we address the data-driven control problem in the presence of arbitrary data poisoning attacks.
A schematic overview of our formulation is shown in Fig.~\ref{fig:illustration}, where the input data and a subset of output data are stored in protected locations, while the remaining outputs are stored in unprotected locations and may be arbitrarily corrupted.
We place no restrictions on the attack sequence against unprotected data.
Our primary contributions are threefold:
\begin{enumerate}
	\item We establish fundamental limitations for data-driven control from arbitrarily poisoned data: poisoning attacks cannot be detected/identified from the dataset alone, unprotected data are non-informative for data-driven controller design with worst-case guarantees, and safety constraints on unprotected outputs are not fundamentally guaranteed.
	\item We show that, if the protected outputs render the system observable and the online input becomes persistently exciting, then the attack-free offline data in a certain interval can be exactly reconstructed from protected offline data and clean online measurements.
	\item Building on the fundamental limitations, we propose \textit{Secure DeePC}, a two-phase architecture that uses output-truncated DeePC before reconstruction and full-output DeePC after reconstruction, thereby recovering MPC-equivalent operation after a finite transient time.
\end{enumerate}

The remainder of this paper is organized as follows:
Section~\ref{section:problem} presents the system model, attack model, data model, and a brief overview of DeePC.
In Section~\ref{section:limitation}, we show the fundamental limitations for establishing data-driven controllers using poisoned data.
In Section~\ref{section:SecureDeePC}, we develop Secure DeePC.
In Section~\ref{section:Simulation}, we show the simulation results using an inverted pendulum model to illustrate the proposed framework.
We conclude the paper in Section~\ref{section:Conclusion}.

\subsubsection*{Notation}
The symbols $ \R $, $ \R^n$, $ \mathbb{Z}$, and $ \Z^+ $ denote the set of real numbers, $ n $-dimensional Euclidean space, integers, and non-negative integers, respectively.
For vectors $ v_1,\ldots,v_k $, define $ \col(v_1,\ldots,v_k) \triangleq [v_1^\top,\ldots,v_k^\top]^\top $.
The notation $ |\I| $ is used to denote the cardinality of a set $ \I $.
For a vector $ x $ and a positive definite matrix $ P $, we write $ \|x\|_P = \sqrt{x^\top P x} $.
For $ a, b \in \Z $ with $ a \leq b $, we define the integer interval $ [a,b] \triangleq \{c \in \Z:a \leq c \leq b \} $. 
Given $ v : \mathbb{Z} \rightarrow \R^n $ and the interval $ [i,j] $, we define a block vector $ v^{[i,j]} $ as $ v^{[i,j]} \triangleq \col(v(i),v(i+1),\ldots,v(j-1),v(j)) $.
Let $ q $ be a positive integer such that $ q \leq j - i + 1 $ and define the block Hankel matrix of depth $ q $, associated with $ v^{[i,j]} $, as
\begin{align*}
\mathscr{H}_q\!\left(\!v^{[i,j]}\!\right) \!\triangleq \!\left[\!\!
\begin{array}{cccc}
v(i) & v(i\!+\!1) & \!\cdots \!& v(j\!-\!q\!+\!1) \\
v(i\!+\!1) & v(i\!+\!2) &\! \cdots \!& v(j\!-\!q\!+\!2) \\
\vdots & \vdots & \!\ddots \!& \vdots \\
v(i\!+\!q\!-\!1) & v(i\!+\!q) & \!\cdots\! & v(j)
\end{array}\!\!\right]\!.
\end{align*}
Note that the subscript $ q $ refers to the number of block rows of the Hankel matrix.
Then, $ v^{[i,j]} $ is said to be \textit{persistently exciting (PE) of order $ q $} if the Hankel matrix $ \mathscr{H}_q(v^{[i,j]}) $ has full row rank \cite{2005SCLWillems}.

\section{Problem Statement}
\label{section:problem}
In this section, we introduce the system model, attack model, and data representation.
We then briefly outline the conventional DeePC algorithm.

\subsection{System Description}
Consider the discrete-time system given by
\begin{align}
\label{eq:system_model}
x(k+1) = \bar A x(k) + \bar B u(k),~~~y(k) = \bar C x(k),
\end{align}
where $ x(k) \in \R^n $ denotes the unknown system state, $ u(k) \in \R^m $ the control input, and $ y(k) \in \R^p $ the system output.
We assume a minimal realization $ (\bar A, \bar B, \bar C) $, i.e., $ (\bar A,\bar B) $ is controllable and $ (\bar A,\bar C) $ is observable.
For notational convenience, define $\mathcal{P} \triangleq \{1,\ldots,p\}$ as the index set of the sensors.
Given a desired reference trajectory $ (r^u, r^y)$, an input constraint set $ \cC_u \subseteq \R^m $, and an output constraint set $ \cC_y \subseteq \R^p $, our goal is to choose inputs so that the system tracks the reference while satisfying the constraints and minimizing a performance cost function.

\subsection{Data Collection and Poisoning Attack}
We collect I/O data during an offline operation of the system (\ref{eq:system_model}) over the time interval $[0, N-1]$, where $ N $ is a sufficiently large integer.
Let $ \textbf{u} \triangleq \col(\textbf{u}(0),\ldots,\textbf{u}(N-1)) \in \R^{mN} $ denote a sequence of $ N $ inputs applied to (\ref{eq:system_model}) and $ \textbf{y} \triangleq \col(\textbf{y}(0),\ldots,\textbf{y}(N-1)) \in \R^{pN} $ denote the corresponding outputs.
The boldface variables (e.g., $ \textbf{u} $, $ \textbf{y} $) indicate data in an offline operation.

The collected data are stored in storage locations for later use by a data-driven controller.
The input data $ \textbf{u} $ are stored in protected storage.
The output data from each sensor are stored in the corresponding storage locations; that is, we have $ p $ output storage locations, and the $ i $th storage location stores the $ i $th output sequence $ \textbf{y}_i \triangleq \col(\textbf{y}_i(0),\ldots,\textbf{y}_i(N-1))\in \R^N $.
Assume that some output storage locations are protected\footnote{The partial protection assumption is prevalent in security-oriented estimation and control, as it is often prohibitively expensive to protect all sensing, communication, and data resources. This protection can be accomplished through encryption and authentication mechanisms, as well as by enhancing physical security.}, whereas other locations are unprotected, and an adversary may corrupt the stored output data.
Denote the sets of protected and unprotected storage locations by $ \cS \subseteq \cP$ and $ \mathcal{U}\subseteq \cP $, respectively.
Note that $ \cS \cap \mathcal{U} = \emptyset $ and $ \cS \cup \mathcal{U} = \cP $.
Without loss of generality, assume that the first $ s_p $ output channels are protected, i.e., $ \cS \triangleq \{1,\ldots,s_p\} $ and $ \U \triangleq \{s_p+1,\ldots,p\} $, where $ s_p $ may be zero.
The system operator knows which storage locations are protected/unprotected.

For $ i \in \mathcal{U} $, we denote the attack sequence against the $ i $th measurement data in unprotected storage by $ \textbf{a}_i \triangleq \col(\textbf{a}_i(0),\ldots,\textbf{a}_i(N-1))\in \R^N $,
where $ \textbf{a}_i(k) \in \R $ denotes the poisoning attack against the $ i $th sensor data at time $ k $ designed by a malicious attacker.
The attacker is assumed to have complete knowledge of the system's state, I/O data, system model, and storage information, and can generate an attack sequence with arbitrary statistical properties, magnitude, and temporal correlations.
Although such an omniscient attacker is unrealistic in practice, this assumption allows us to analyze the worst-case scenario.

Define the poisoned output sequence as
\begin{align}
\widetilde{\textbf{y}} \triangleq \col\left(
\textbf{y}(0),\ldots,\textbf{y}(N\!-\!1)
\right) \!+\!  \col\left(
\textbf{a}(0),\ldots,\textbf{a}(N\!-\!1)
\right),\!\!
\end{align}
where $ \textbf{a}(k) \triangleq [~\textbf{a}_1(k),\ldots,\textbf{a}_p(k)~]^\top \in \R^p $ represents a poisoning attack against the output data at time $ k $.
Since the data in protected storage are secure, $ \textbf{a}_i \equiv 0 $ and $ \widetilde{\textbf{y}}_i = \textbf{y}_i $ for all $ i \in \cS $.
Denote the index set of the compromised storage as $ \A  \triangleq \left\lbrace i \in \U:\textbf{a}_i \not\equiv 0\right\rbrace $.
If all unprotected storage locations are compromised, then $ \A = \U $.

We assume that the system parameters $ \bar A $, $ \bar B $, and $ \bar C $ are unknown, whereas the I/O data $ (\textbf{u}, \widetilde{\textbf{y}}) $, and the protected/unprotected storage information are known.
The attack sequence $ \textbf{a} $, the attack-free output data $ \textbf{y}_\U $ in unprotected storage $ \U $, and the compromised storage set $ \A $ are all unknown.
In this paper, we focus on offline poisoning attacks against output data, whereas the online measurements obtained during controller operation are assumed to be attack-free\footnote{Several studies propose data-driven controllers against malicious \textit{online} attacks.
See, e.g., \cite{2023TACBullo,2025AutomaticaLiu,2025L-CSSHespanha}.
Note that these studies considered only online attacks and do not account for offline poisoning attacks studied here.}. 
Secure control problems in the presence of both offline/online attacks are important topics for future work.



\subsection{Data Representation}

For a given set $ \Gamma \subseteq \cP $, denote
\begin{align}
	\label{eq:y_tilde_Gamma}
\widetilde{\textbf{y}} _{\Gamma}\triangleq \col(\widetilde{\textbf{y}}_\Gamma(0),\ldots,\widetilde{\textbf{y}}_\Gamma (N-1)) \in \R^{|\Gamma|N},
\end{align}
where $ \widetilde{\textbf{y}}_\Gamma(k) \in \R^{|\Gamma|} $ is the subvector obtained from $ \widetilde{\textbf{y}}(k) $ indexed by the set $ \Gamma $.
Hence, $ \widetilde{\textbf{y}}_{\cS} $ and $ \widetilde{\textbf{y}}_{\U} $ denote the output data stored in protected storage and unprotected storage, respectively.
For notational simplicity, denote the full I/O data by $ \textbf{D} \triangleq(\textbf{u}, \widetilde{\textbf{y}} ) = (\textbf{u}, \widetilde{\textbf{y}}_\cS,\widetilde{\textbf{y}}_\mathcal{U}  ) $.
Also, define the I/O data in the protected storage (i.e., protected data) as $ \textbf{D}_\cS \triangleq (\textbf{u}, \widetilde{\textbf{y}}_\cS ) = (\textbf{u}, \textbf{y}_\cS ) \subseteq \textbf{D}$.
Note that $ \textbf{D} = (\textbf{D}_\cS, \widetilde{\textbf{y}}_\mathcal{U})$.

Given two nonnegative integers $ T_\p $ and $ T_\f $, we partition the full I/O data into two parts, referred to as the \textit{past} and \textit{future data}, as
\begin{align}
\label{eq:Hankel-u}
\left[\!\!
\begin{array}{c}
\textbf{U}^\p \\ \textbf{U}^\f
\end{array}\!\!\right] &\triangleq \mathscr{H}_{T_\p+ T_\f}\left(\textbf{u}\right) \in \R^{m(T_\p + T_\f)\times(N-(T_\p + T_\f)+1)},~\\
\label{eq:Hankel-y}
\left[\!\!
\begin{array}{c}
\widetilde{\textbf{Y}}^\p \\ \widetilde{\textbf{Y}}^\f
\end{array}\!\!\right] &\triangleq \mathscr{H}_{T_\p + T_\f}\left(\widetilde{\textbf{y}}\right)\in \R^{p(T_\p + T_\f)\times(N-(T_\p + T_\f)+1)},
\end{align}
where $ \textbf{U}^\p $ consists of the first $ T_\p $ block rows of $ \mathscr{H}_{T_\p+ T_\f}\left(\textbf{u}\right) $ and $ \textbf{U}^\f $ consists of the last $ T_\f $ block rows of the Hankel matrix (similarly for $ \widetilde{\textbf{Y}}^\p $ and $ \widetilde{\textbf{Y}}^\f $).
For a set $ \Gamma \subseteq \cP $, define the Hankel matrix associated with $ \widetilde{\textbf{y}} _{\Gamma} $ as
\begin{align}
\label{eq:Hankel-y_Gamma}
\left[\!\!
\begin{array}{c}
\widetilde{\textbf{Y}}^\p_\Gamma \\ \widetilde{\textbf{Y}}^\f_\Gamma
\end{array}\!\!\right] \triangleq \mathscr{H}_{T_\p + T_\f}\left(\widetilde{\textbf{y}}_\Gamma\right)\in \R^{|\Gamma|(T_\p + T_\f)\times(N-(T_\p + T_\f)+1)}.
\end{align}

\subsection{An Overview of DeePC \cite{2019ECCDorfler}}
Consider the attack-free scenario (i.e., $ \textbf{a}\equiv 0 $).
Then, the DeePC problem in the online control phase for (\ref{eq:system_model}) at time $ k \in \Z^+ $ can be formulated as follows:
\begin{subequations}
\label{eq:DeePC}
\begin{align}
& \minimize_{g,u^\f,y^\f}~~J\left(u^\f,y^\f\right) \\
\label{eq:DeePC_const}
&\mathrm{s.t.}~\left[\!\!
\begin{array}{c}
\textbf{U}^\p \\ \textbf{Y}^\p \\ \textbf{U}^\f \\ \textbf{Y}^\f
\end{array}\!\!\right]g = \left[\!\!
\begin{array}{c}
u^\mathsf{p} \\ y^\mathsf{p} \\ u^\f \\ y^\f
\end{array}\!\!\right],~u^\f \in \cC_u^{T_\f},~y^\f \in \cC_y^{T_\f},
\end{align}
\end{subequations}
\noindent
where $ \cC_u^{T_\f} $ is the $ T_\f $-fold Cartesian product of $\cC_u $ (similarly for $ \cC_y^{T_\f} $), and $ u^\p = u^{[k-T_\p, k-1]}$ and $ y^\p = y^{[k-T_\p, k-1]} $ are the most recent $ T_\p $-step past I/O online measurements of the system (\ref{eq:system_model}), respectively.
At the activation time $ k = 0 $, it is assumed that the controller has access to the past online measurements $ u^{[-T_\p, -1]} $ and $ y^{[-T_\p,-1]} $.

The decision variables are defined as $ g \in \R^{N- (T_\mathsf{p} + T_\f)  +1} $, $ u^\f \triangleq \col(u^\f(k),\ldots,u^\f(k+T_\f-1)) \in \R^{mT_\f} $, and $ y^\f \triangleq \col(y^\f(k),\ldots,y^\f(k+T_\f-1)) \in \R^{pT_\f}$.
The objective function is defined as
\begin{align*}
J\left(u^\f,y^\f\right)\! \!\triangleq\! \sum_{t=k}^{k+T_\f-1}\left(\left\| y^\f(t)\!-\!r^y(t)\right\|_Q^2\!+\!\left\| u^\f(t)\!-\!r^u(t)\right\|_R^2\right),
\end{align*}
where $ r^y(k) \in\R^p $ and $ r^u(k) \in\R^m  $ denote the desired output and input references at time $ k $, respectively, and $ Q \in \R^{p \times p} $ and $ R \in \R^{m\times m} $ are positive definite matrices.

For each $ k \in \Z^+ $, the DeePC problem (\ref{eq:DeePC}) is computed, and the given input sequence for the unknown system is applied as $ u(k),\ldots,u(k+s) = u^\f(k),\ldots,u^\f(k+s)$ for some $ s < T_\f -1 $.
Then, setting $ k $ to $ k+s $ and updating $ u^\p $ and $ y^\p $, we can iterate the DeePC algorithm for all $ k $.

If $ (\bar A, \bar B) $ is controllable, $ (\bar A, \bar C) $ is observable, $ T_\p \geq n$, and $ \textbf{u} $ is PE of order $ n + T_\p + T_\f $, then the optimal control sequence and corresponding system output obtained by DeePC coincide with those of the MPC problem with prediction horizon $ T_\f $ without identifying any model parameters \cite[Corollary 5.1]{2019ECCDorfler}.
However, when the data are poisoned, the Hankel matrices $ \textbf{Y}^\p $ and $ \textbf{Y}^\f $ are corrupted by the attack, which may prevent the computation of optimal inputs and may even destabilize the system.
This motivates the development of Secure DeePC.

\section{Fundamental Limitations of Data-Driven Control using Poisoned Data}
\label{section:limitation}
Before deriving Secure DeePC, we show the fundamental limitations of data-driven control using poisoned data, where an adversary can arbitrarily corrupt the unprotected data.

\subsection{Impossibility of Certified Detection and Identification}
\label{subsection:detection_identification}
Let $ \mathscr{D}:\textbf{D} \mapsto\{\mathrm{Attack,~NoAttack}\} $ and $ \mathscr{I}:\textbf{D} \mapsto 2^\cP $ denote an arbitrary data-driven attack detector and identifier, respectively, where $ 2^\cP $ denotes the power set of $ \cP $.
The following result provides the impossibility of certified data-driven attack detection and identification.
\begin{proposition}
\label{proposition:impossibility_detection}
Suppose that there exists at least one unprotected storage location, i.e., $ \U \neq \emptyset $.
There exist an attack-free dataset $ \textbf{D}_0 $, a minimal system $ (A',B',C') $, and a nonzero poisoning attack sequence $ \textbf{a} $ yielding a poisoning dataset $ \textbf{D} $ such that $ \textbf{D}_0 = \textbf{D} $.
Consequently, for \textit{any} data-driven attack detector $ \mathscr{D} $ and identifier $ \mathscr{I}$, $ \mathscr{D}(\textbf{D}_0) = \mathscr{D}(\textbf{D}) $ and $ \mathscr{I}(\textbf{D}_0) = \mathscr{I}(\textbf{D})$, which implies the attack $ \textbf{a} $ cannot be detected and identified.
\end{proposition}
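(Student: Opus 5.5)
The plan is an indistinguishability construction: two scenarios that the defender cannot tell apart because they produce exactly the same stored data. In the first scenario the true plant is $(\bar A,\bar B,\bar C)$, driven by the input $\textbf{u}$ from some initial state $x(0)$, and the attacker poisons the unprotected data with a nonzero $\textbf{a}$. This yields the poisoned dataset $\textbf{D}=(\textbf{u},\textbf{y}_\cS,\textbf{y}_\U+\textbf{a}_\U)$. In the second scenario a different minimal plant $(A',B',C')$ generates, with no attack, the dataset $\textbf{D}_0=(\textbf{u},\textbf{y}_\cS,\textbf{y}'_\U)$. The goal is to choose $(A',B',C')$ and the initial state so that $\textbf{y}'_\cS=\textbf{y}_\cS$ and $\textbf{y}'_\U=\textbf{y}_\U+\textbf{a}_\U$, with $\textbf{a}_\U\neq0$. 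Once $\textbf{D}_0=\textbf{D}$, the conclusion is immediate: any detector $\mathscr{D}$ and identifier $\mathscr{I}$ are functions of the data alone, so they return the same output on both. The defender therefore cannot certify \emph{NoAttack} versus \emph{Attack}, nor recover $\A$.

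Fix an index $j\in\U$, which exists because $\U\neq\emptyset$. I will keep $A'=\bar A$ and $B'=\bar B$ and perturb only row $j$ of the output matrix, setting $C'=\bar C+e_j c^\top$ for some $c\in\R^n$. Using the same initial state $x(0)$ and input $\textbf{u}$, the state trajectory is unchanged. Hence $\textbf{y}'_i=\textbf{y}_i$ for every $i\neq j$, and in particular on $\cS$. In channel $j$ we get $\textbf{y}'_j(k)=\textbf{y}_j(k)+c^\top x(k)$. The attack is then defined as $\textbf{a}_j(k)\triangleq c^\top x(k)$ and $\textbf{a}_i\equiv0$ for $i\neq j$. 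This attack is admissible, since the attacker knows the state and may poison channel $j$ arbitrarily, and by construction $\textbf{D}=\textbf{D}_0$.

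The main obstacle is to ensure two things at once: that $\textbf{a}\not\equiv0$, and that $(A',B',C')$ is minimal. Controllability of $(A',B')=(\bar A,\bar B)$ is inherited. For nonzero attack, choose the initial state and $\textbf{u}$ so that $x(k)\neq0$ for some $k$; any nontrivial experiment gives this, for example $x(0)\neq0$ or a nonzero input, using controllability. Then pick $c$ with $c^\top x(k)\neq0$. Observability of $(\bar A,\bar C+e_jc^\top)$ is an open condition in $c$, and it holds at $c=0$. The condition $c^\top x(k)\neq0$ is also open. So I will take $c=\epsilon\, x(k)$ with $\epsilon>0$ small enough that the observability matrix keeps full rank, which is possible by continuity of the determinant of a full-rank $n\times n$ minor. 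This choice gives $c^\top x(k)=\epsilon\|x(k)\|^2\neq0$, and both requirements hold.

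If the paper intends the dataset to come from the system $(\bar A,\bar B,\bar C)$ under attack, with $\textbf{D}_0$ generated by the primed system, the construction matches that reading directly. The roles can also be swapped, taking the primed system as the poisoned one, with the same argument.
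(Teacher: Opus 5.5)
Your construction is correct and is essentially the paper's: keep $(\bar A,\bar B)$ and hence the state trajectory, perturb one unprotected row of the output matrix, and let the attack cancel the perturbation. The paper puts $\textbf{D}_0$ on the true plant and $\textbf{D}$ on the primed plant, which you rightly note is symmetric. The only technical difference is the perturbation direction. The paper perturbs along another row of $\bar C$, namely $C'=(I_p+\alpha e_ie_j^\top)\bar C$ with $\textbf{y}_j\not\equiv 0$. Observability is then immediate from invertibility of $I_p+\alpha e_ie_j^\top$ for every $\alpha>0$, whereas your direction $c=\epsilon\,x(k)$ needs the small-$\epsilon$ openness argument, which is nonetheless valid.
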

\begin{proof}
	Let $ \textbf{D}_0 \triangleq (\textbf{u}, \textbf{y})$ be the attack-free dataset such that $ \textbf{y}_j \not\equiv 0 $ for some $ j \in \cP $.
	Let $ \textbf{x}(\cdot) $ be the offline state trajectory of $ (\bar A,\bar B) $ under input $ \textbf{u} $ and initial condition $ \textbf{x}(0) $.
	Fix any $ i \in \U $ and, for $ \alpha > 0 $, define $ T_\alpha \triangleq I_p + \alpha e_i e_j^\top $ and $ C' \triangleq T_\alpha \bar C $, where $ e_i $ and $ e_j $ are the $ i $th and $ j $th canonical basis of $ \R^p $, respectively.
	
	Consider another system $ (A', B', C') $ with $ A' \triangleq \bar A $ and $ B' \triangleq \bar B $.
	Since $ T_\alpha $ is invertible, $ (A',C') $ is observable, and thus $ (A', B', C') $ is minimal.
	Consider a nonzero attack sequence $ \textbf{a} $ such that 
	\begin{align}
		\label{eq:attack_proposition}
		\textbf{a}_\ell(k)=\left\lbrace \!\!
		\begin{array}{ll}
			-\alpha \textbf{y}_j(k),&~\ell = i, \\
			0, &~\mathrm{otherwise}.
		\end{array}\right.
	\end{align}
	Then, the $ i $th poisoned output $ \widetilde{\textbf{y}}'_i $ of $ (A', B', C') $ is given by
	\begin{align*}
		\widetilde{\textbf{y}}'_i(k)&=C'_i \textbf{x}(k) + \textbf{a}_i(k)\\
		&= \left(\bar C_i + \alpha\bar C_j\right)\textbf{x}(k) -\alpha \textbf{y}_j(k) = \bar C_i \textbf{x}(k) = \textbf{y}_i(k).
	\end{align*}
	Also, for $ \ell \in \cP \setminus \{i\} $, we have $ \widetilde{\textbf{y}}'_\ell(k) = \textbf{y}_\ell(k) $, which implies $ \widetilde{\textbf{y}} \equiv  \textbf{y}$; therefore, $ \textbf{D}_0 = \textbf{D} $.
	Consequently, for any data-driven attack detector $ \mathscr{D} $ and identifier $ \mathscr{I}$, $ \mathscr{D}(\textbf{D}_0) = \mathscr{D}(\textbf{D}) $ and $ \mathscr{I}(\textbf{D}_0) = \mathscr{I}(\textbf{D})$.
\end{proof}

This proposition implies that any detector $ \mathscr{D} $ declares $ \mathrm{NoAttack} $ on some attack-free dataset necessarily declares $ \mathrm{NoAttack} $ on some poisoned dataset as well.
Similarly, for any identifier $ \mathscr{I} $, if it returns $ \emptyset $ on some attack-free dataset, then it also returns $ \emptyset $ on some poisoned dataset.
The result establishes the fundamental impossibility of certified detection and identification of poisoning attacks based solely on poisoned data.
Hence, even if some data stored in an unprotected location are attack-free (i.e., $ |\A|<|\U| $), it is impossible to identify them with certainty, which implies that creating an additional attack-free dataset from unprotected storage is impossible.
In order to detect and identify poisoning attacks using only data, additional assumptions on the attacks and/or adversaries are required, such as the existence of a sufficiently long attack-free interval in the dataset \cite{2026ShinoharaTCNS}.

Importantly, this impossibility is not alleviated by assuming a small number of attacked channels. 
Also, knowing the maximal number of corrupted channels $ |\A| $ does not remove this impossibility. 
This stands in sharp contrast to model-based settings, where the measurement structure  (i.e., the matrix $ \bar C $) is known and redundancy conditions, such as sparse observability, enable attack detection and identification \cite{2016TACTabuada}.
On the other hand, in the data-driven setting, the underlying measurement structure is unknown; thus, redundancy in the number of output channels does not yield certifiable guarantees under arbitrary poisoning attacks.


\subsection{Non-informativity of Unprotected Data}
We next show that the unprotected data $ \widetilde{\textbf{y}}_\U $ are non-informative for data-driven control.
To this end, we first define the set of systems consistent with the full data $ \textbf{D} $ as
\begin{align}
\label{eq:Sigma}
&\Sigma\left(\textbf{D}\right) \triangleq \left\lbrace (A,B,C):\exists \textbf{x}_0, \exists \textbf{a} \in\R^{pN}~\mathrm{s.t.}~\textbf{x}(0)=\textbf{x}_0, \right. \nonumber \\
&~~~~\textbf{x}(k+1)=A\textbf{x}(k)+ B \textbf{u}(k), ~\widetilde{\textbf{y}}_\cS(k) = C_{\cS}\textbf{x}(k),\nonumber \\
& ~~~~\left. \widetilde{\textbf{y}}_{\U}(k) = C_{\U}\textbf{x}(k) + \textbf{a}_{\U}(k),~\forall k \in [0,N-1]\right\rbrace,
\end{align}
where $ C_{\cS} \in \R^{|\cS|\times n}$ denotes the submatrix obtained from $ C $ indexed by $ \cS $ (similar for $ C_{\U} $).
Similarly, the set of systems consistent with the protected data $ \textbf{D}_\cS $ is given as
\begin{align}
\label{eq:Sigma_S}
&\Sigma_\cS\left(\textbf{D}_\cS\right) \triangleq \left\lbrace (A,B,C):\exists \textbf{x}_0 ~\mathrm{s.t.}~\textbf{x}(0)=\textbf{x}_0, \right. \nonumber \\
& ~~~~\textbf{x}(k+1)=A\textbf{x}(k)+ B \textbf{u}(k),~\widetilde{\textbf{y}}_\cS(k)=C_\cS\textbf{x}(k), \nonumber \\
& ~~~~\left.\forall k \in [0,N-1]\right\rbrace.
\end{align}
Then, we have the following result.
\begin{proposition}
\label{proposition:model_equivalence}
For given $ \textbf{D} $, $ \Sigma_\cS(\textbf{D}_\cS) = \Sigma(\textbf{D}) $.
\end{proposition}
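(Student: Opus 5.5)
We will establish the set equality $\Sigma_\cS(\textbf{D}_\cS) = \Sigma(\textbf{D})$ by proving the two inclusions separately. Both arguments work at the level of the defining conditions in (\ref{eq:Sigma}) and (\ref{eq:Sigma_S}). The key observation is that the attack sequence $\textbf{a}$ appears in (\ref{eq:Sigma}) as an existentially quantified, completely unconstrained variable. Consequently, the equations indexed by $\U$ impose no restriction on $(A,B,C)$.

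For the inclusion $\Sigma(\textbf{D}) \subseteq \Sigma_\cS(\textbf{D}_\cS)$, take any $(A,B,C)\in\Sigma(\textbf{D})$ together with a witnessing pair $(\textbf{x}_0,\textbf{a})$. We keep the same $\textbf{x}_0$ and discard $\textbf{a}$. The state recursion $\textbf{x}(k+1)=A\textbf{x}(k)+B\textbf{u}(k)$ is identical in both definitions, and the protected-output equations $\widetilde{\textbf{y}}_\cS(k)=C_\cS\textbf{x}(k)$ hold by assumption for all $k\in[0,N-1]$. Hence $(A,B,C)\in\Sigma_\cS(\textbf{D}_\cS)$. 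This direction is immediate.

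For the reverse inclusion $\Sigma_\cS(\textbf{D}_\cS)\subseteq\Sigma(\textbf{D})$, take $(A,B,C)\in\Sigma_\cS(\textbf{D}_\cS)$ with witness $\textbf{x}_0$, and let $\textbf{x}(\cdot)$ denote the resulting trajectory under $\textbf{u}$. We construct an attack explicitly by setting
\begin{align*}
\textbf{a}_\U(k) \triangleq \widetilde{\textbf{y}}_\U(k) - C_\U \textbf{x}(k), \qquad \textbf{a}_\cS(k) \triangleq 0, \qquad k\in[0,N-1].
\end{align*}
This $\textbf{a}$ lies in $\R^{pN}$, and by construction $\widetilde{\textbf{y}}_\U(k)=C_\U\textbf{x}(k)+\textbf{a}_\U(k)$. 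The remaining conditions of (\ref{eq:Sigma}) are inherited unchanged from membership in $\Sigma_\cS(\textbf{D}_\cS)$, so $(A,B,C)\in\Sigma(\textbf{D})$.

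There is no substantial obstacle here. The only point requiring care is a bookkeeping convention: the attack on protected channels must be zero, consistent with the data model $\textbf{a}_i\equiv 0$ for $i\in\cS$. The definition (\ref{eq:Sigma}) quantifies over $\textbf{a}\in\R^{pN}$ but only uses $\textbf{a}_\U$, so setting $\textbf{a}_\cS=0$ in the construction makes the witness admissible under either reading. We will also remark that the state dimension $n$ is allowed to vary over the elements of both sets in the same way, so it does not affect the argument. Finally, we will note the interpretation that underlies the subsequent non-informativity claim: since the attacker is unrestricted, the unprotected data $\widetilde{\textbf{y}}_\U$ cannot exclude any system that is consistent with the protected data $\textbf{D}_\cS$.
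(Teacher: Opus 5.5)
Your proposal is correct and takes the same approach as the paper's proof. The inclusion $\Sigma(\textbf{D})\subseteq\Sigma_\cS(\textbf{D}_\cS)$ follows by discarding the attack witness, and the reverse inclusion follows by defining $\textbf{a}_\U(k)\triangleq\widetilde{\textbf{y}}_\U(k)-C_\U\textbf{x}(k)$ with $\textbf{a}_\cS\equiv 0$.
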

\begin{proof}
	The inclusion $ \Sigma(\textbf{D})\subseteq \Sigma_\cS(\textbf{D}_\cS) $ is immediate from the definitions.
	Conversely, let $ (A,B,C) \in \Sigma_\cS(\textbf{D}_\cS) $, and let $ \textbf{x}(\cdot) $ be a trajectory certifying this membership.
	Define $ \textbf{a}_\U(k) \triangleq \widetilde{\textbf{y}}_{\U}(k) - C_\U \textbf{x}(k) $ with $ \textbf{a}_{\cS}(k) \equiv 0 $.
	Then, $ \widetilde{\textbf{y}}_\U(k) = C_{\U}\textbf{x}(k) + \textbf{a}_{\U}(k) $, so $ (A,B,C) \in \Sigma(\textbf{D}) $.
	Hence, $ \Sigma_\cS(\textbf{D}_\cS) \subseteq \Sigma(\textbf{D}) $, and therefore $ \Sigma_\cS(\textbf{D}_\cS) = \Sigma(\textbf{D}) $.
%
\end{proof}

Under arbitrary poisoning attacks, the set of systems consistent with the full data $ \textbf{D} $ matches the set of systems expressible using only the protected data $ \textbf{D}_\cS $, irrespective of the particular realization of the unprotected data $ \widetilde{\textbf{y}}_\U $.
Building on this property, we next show the non-informativity of the unprotected data.

Let $ \Pi $ denote the class of admissible controllers in the system (\ref{eq:system_model}), and let $ \mathscr{A}: \textbf{D} \mapsto \pi \in \Pi $ denote an arbitrary data-driven design rule that outputs a controller $ \pi $.
For all protected datasets $ \textbf{D}_\cS $, define a data-driven design rule $ \mathscr{A}_\cS: \textbf{D}_\cS \mapsto \pi \in \Pi $ based on protected data as $ \mathscr{A}_\cS(\textbf{D}_\cS) \triangleq \mathscr{A}((\textbf{D}_\cS, 0))$.
Denote an arbitrary loss function for a data-driven controller by $ \mathscr{L}: \Pi \times \Sigma(\textbf{D}) \rightarrow \R \cup \{+\infty\} $.
Then, we have the following result.

\begin{proposition}
\label{proposition:algorithm_equivalence}
For given protected data $ \textbf{D}_\cS $,
\begin{align}
\label{eq:algorithm_equivalence}
&\inf_{\mathscr{A}} \sup_{\widetilde{\textbf{y}}_\U} \sup_{(A,B,C)\in \Sigma((\textbf{D}_\cS, \widetilde{\textbf{y}}_\U ))} \mathscr{L}\left(\mathscr{A}\left((\textbf{D}_\cS, \widetilde{\textbf{y}}_\U)\right), \left(A,B,C\right)\right) \nonumber\\
&~~~=\inf_{\mathscr{A}_\cS} \sup_{(A,B,C)\in \Sigma_\cS(\textbf{D}_\cS)} \mathscr{L}\left(\mathscr{A}_\cS\left(\textbf{D}_\cS\right), \left(A,B,C\right)\right).
\end{align}
\end{proposition}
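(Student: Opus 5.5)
The plan is to use Proposition~\ref{proposition:model_equivalence} to remove the dependence of the inner supremum on $\widetilde{\textbf{y}}_\U$, and then compare the two outer infima by showing each is bounded by the other. By Proposition~\ref{proposition:model_equivalence}, for every $\widetilde{\textbf{y}}_\U$ we have $\Sigma((\textbf{D}_\cS,\widetilde{\textbf{y}}_\U)) = \Sigma_\cS(\textbf{D}_\cS)$. The left-hand side of (\ref{eq:algorithm_equivalence}) can therefore be rewritten as
\begin{align*}
\inf_{\mathscr{A}} \sup_{\widetilde{\textbf{y}}_\U} \sup_{(A,B,C)\in \Sigma_\cS(\textbf{D}_\cS)} \mathscr{L}\left(\mathscr{A}\left((\textbf{D}_\cS, \widetilde{\textbf{y}}_\U)\right), (A,B,C)\right).
\end{align*}
Here the inner feasible set no longer depends on $\widetilde{\textbf{y}}_\U$; only the designed controller does. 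Let $F(\pi) \triangleq \sup_{(A,B,C)\in\Sigma_\cS(\textbf{D}_\cS)} \mathscr{L}(\pi,(A,B,C))$. The left-hand side is then $\inf_{\mathscr{A}}\sup_{\widetilde{\textbf{y}}_\U} F(\mathscr{A}((\textbf{D}_\cS,\widetilde{\textbf{y}}_\U)))$, and the right-hand side is $\inf_{\mathscr{A}_\cS} F(\mathscr{A}_\cS(\textbf{D}_\cS))$.

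For ``$\geq$'': fix any rule $\mathscr{A}$. Choosing $\widetilde{\textbf{y}}_\U = 0$ gives $\sup_{\widetilde{\textbf{y}}_\U} F(\mathscr{A}((\textbf{D}_\cS,\widetilde{\textbf{y}}_\U))) \geq F(\mathscr{A}((\textbf{D}_\cS,0))) = F(\mathscr{A}_\cS(\textbf{D}_\cS))$, where $\mathscr{A}_\cS$ is the protected-data rule induced by $\mathscr{A}$ as defined before the proposition. This quantity is at least the right-hand side infimum. Taking the infimum over $\mathscr{A}$ gives LHS $\geq$ RHS.

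For ``$\leq$'': given any protected-data rule $\mathscr{A}_\cS$, define a full-data rule $\mathscr{A}$ that ignores the unprotected data, namely $\mathscr{A}((\textbf{D}_\cS,\widetilde{\textbf{y}}_\U)) \triangleq \mathscr{A}_\cS(\textbf{D}_\cS)$. Its induced protected rule is exactly $\mathscr{A}_\cS$. Moreover $\sup_{\widetilde{\textbf{y}}_\U} F(\mathscr{A}(\cdot)) = F(\mathscr{A}_\cS(\textbf{D}_\cS))$, because the argument is constant in $\widetilde{\textbf{y}}_\U$. Hence LHS $\leq F(\mathscr{A}_\cS(\textbf{D}_\cS))$ for every $\mathscr{A}_\cS$, and taking the infimum gives LHS $\leq$ RHS.

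The main obstacle is bookkeeping rather than mathematics. First, the class of rules $\mathscr{A}_\cS$ must be understood as all maps from protected datasets to $\Pi$. The paper defines them as those induced by some $\mathscr{A}$, and the ``ignore $\widetilde{\textbf{y}}_\U$'' construction shows every map from protected datasets to $\Pi$ arises this way, so the two readings coincide. Second, the loss $\mathscr{L}$ is formally defined on $\Pi\times\Sigma(\textbf{D})$. Since $\Sigma(\textbf{D}) = \Sigma_\cS(\textbf{D}_\cS)$ for every $\textbf{D}$ with the given $\textbf{D}_\cS$, the domain is the same set throughout, so $F$ is well defined. Extended-real values $+\infty$ cause no issue, since only monotonicity of $\sup$ and $\inf$ is used.
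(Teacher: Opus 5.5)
Your proposal is correct and follows the paper's proof essentially step for step. You use Proposition~\ref{proposition:model_equivalence} to make the inner feasible set independent of $\widetilde{\textbf{y}}_\U$, evaluate at $\widetilde{\textbf{y}}_\U = 0$ for ``$\geq$'', and extend $\mathscr{A}_\cS$ by ignoring the unprotected data for ``$\leq$''; your remarks on the domain of $\mathscr{L}$ and on the class of rules $\mathscr{A}_\cS$ are sound bookkeeping that the paper leaves implicit.
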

\begin{proof}
	For notational clarity, write $ \theta = (A,B,C) $ for a candidate system. 
	By Proposition~\ref{proposition:model_equivalence}, $ \Sigma_\cS(\textbf{D}_\cS) = \Sigma(\textbf{D}) $ for all $ \widetilde{\textbf{y}}_\U $, which implies that the left-hand side of (\ref{eq:algorithm_equivalence}) is
	\begin{align*}
		\inf_{\mathscr{A}} \sup_{\widetilde{\textbf{y}}_\U} \sup_{\theta \in \Sigma_\cS(\textbf{D}_\cS)} \mathscr{L}\left(\mathscr{A}\left( (\textbf{D}_\cS, \widetilde{\textbf{y}}_\U)\right), \theta \right).
	\end{align*}
	Fix any design rule $ \mathscr{A}$.
	Since the supremum over $ \widetilde{\textbf{y}}_\U $ dominates the value at $ \widetilde{\textbf{y}}_\U = 0 $, it follows that
	\begin{align*}
		\sup_{\widetilde{\textbf{y}}_\U} \!\!\sup_{\theta \in \Sigma_\cS(\textbf{D}_\cS)} \!\!\!\!\!\mathscr{L}\left(\mathscr{A}\left((\textbf{D}_\cS, \!\widetilde{\textbf{y}}_\U)\right), \!\theta \right) \!\geq\!\!\!\!  \sup_{\theta\in \Sigma_\cS(\textbf{D}_\cS)} \!\!\!\!\!\mathscr{L}\left(\mathscr{A}\left((\textbf{D}_\cS , \!0)\right), \!\theta \right).
	\end{align*}
	Hence, taking the infimum over $ \mathscr{A} $ on both sides and using $ \mathscr{A}((\textbf{D}_\cS,0 )) = \mathscr{A}_\cS(\textbf{D}_\cS) $, this can be written as
	\begin{align}
		\label{eq:proposition_inequality_02}
		&\inf_{\mathscr{A} }\sup_{\widetilde{\textbf{y}}_\U} \sup_{\theta\in \Sigma_\cS(\textbf{D}_\cS)} \mathscr{L}\left(\mathscr{A}\left((\textbf{D}_\cS, \widetilde{\textbf{y}}_\U)\right), \theta \right) \nonumber \\
		& \hspace{10mm} \geq \inf_{\mathscr{A}_\cS } \sup_{\theta\in \Sigma_\cS(\textbf{D}_\cS)} \mathscr{L}\left(\mathscr{A}_\cS\left(\textbf{D}_\cS\right), \theta \right).
	\end{align}
	
	Conversely, fix any design rule $ \mathscr{A}_\cS $, and define its full-data extension $ \mathscr{A}^e $ by $ \mathscr{A}^e(\textbf{D}'_\cS, \widetilde{\textbf{y}}_{\cS}) \triangleq \mathscr{A}_\cS(\textbf{D}'_\cS)$ for all $ \textbf{D}'_\cS $ and $ \widetilde{\textbf{y}}_\U $, namely, $ \mathscr{A}^e $ ignores the unprotected data.
	Then, 
	\begin{align*}
		\sup_{\widetilde{\textbf{y}}_\U} \!\! \sup_{\theta \in \Sigma_\cS(\textbf{D}_\cS)}\!\!\!\! \mathscr{L}\left(\mathscr{A}^e\left((\textbf{D}_\cS, \widetilde{\textbf{y}}_\U)\right), \theta \right) =\!\!\!\! \sup_{\theta\in \Sigma_\cS(\textbf{D}_\cS)} \!\!\!\!\mathscr{L}\left(\mathscr{A}_\cS\left(\textbf{D}_\cS\right), \theta \right)\!.
	\end{align*}
	Hence, we have
	\begin{align*}
		&\inf_{\mathscr{A}} \sup_{\widetilde{\textbf{y}}_\U} \sup_{\theta \in \Sigma_\cS(\textbf{D}_\cS)} \mathscr{L}\left(\mathscr{A}\left( (\textbf{D}_\cS, \widetilde{\textbf{y}}_\U)\right), \theta \right)\\
		&\hspace{10mm} \leq\sup_{\widetilde{\textbf{y}}_\U}  \sup_{\theta \in \Sigma_\cS(\textbf{D}_\cS)} \mathscr{L}\left(\mathscr{A}^e\left((\textbf{D}_\cS, \widetilde{\textbf{y}}_\U)\right), \theta \right) \\
		&\hspace{10mm} = \sup_{\theta\in \Sigma_\cS(\textbf{D}_\cS)} \mathscr{L}\left(\mathscr{A}_\cS\left(\textbf{D}_\cS\right), \theta \right)
	\end{align*}
	Now that $ \mathscr{A}_\cS $ is arbitrary, taking the infimum over $ \mathscr{A}_\cS  $ yields the reverse relation of (\ref{eq:proposition_inequality_02}).
	Therefore, (\ref{eq:algorithm_equivalence}) holds. 
\end{proof}

In a minimax formulation where the system operator seeks to minimize the worst-case performance over all systems consistent with the stored data and over all admissible poisoning attacks, Proposition~\ref{proposition:algorithm_equivalence} indicates that exploiting $ \widetilde{\textbf{y}}_\U $ cannot improve the worst-case performance beyond what is achievable using $ \textbf{D}_\cS $ alone.
Therefore, for any data-driven controller with worst-case guarantees, designing it solely using the protected data is optimal.

\subsection{Impossibility of Certified Guarantee for Output Safety}
We next show that hard safety constraints on outputs $ \mathcal{C}_y $ cannot be guaranteed from the poisoned dataset. 
\begin{proposition}
	\label{proposition:constraints}
	Suppose $ \U \neq \emptyset $, and the output constraint set $ \mathcal{C}_y $ imposes a bounded interval on some component $ i \in \U $.
	Let $ \textbf{D} $ be any offline poisoned dataset generated by the true plant $ (\bar A, \bar B, \bar C) $ with $ (\bar A, \bar C_\cS) $ observable and any attack sequence $ \textbf{a} $.
	Assume that the given $ T_\p $-step online sequence $ (u^{[-T_\p, -1]}, y^{[-T_\p, -1]}) $ has terminal state $ x^\star $ satisfying $ \bar C_j x^\star \neq 0 $ for some $ j \in \cS $.
	Then, there exists a minimal system $ (A', B', C') \in \Sigma(\textbf{D}) $ which leads to the $ i $th output constraint's violation in online control operation.
	Consequently, no data-driven design rule based solely on $ \textbf{D} $ can guarantee the hard constraint on outputs in $ \U $ uniformly over the consistency class $ \Sigma(\textbf{D}) $.
\end{proposition}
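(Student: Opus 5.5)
We mimic the construction in the proof of Proposition~\ref{proposition:impossibility_detection}, now exploiting Proposition~\ref{proposition:model_equivalence}. Under $\Sigma(\textbf{D})=\Sigma_\cS(\textbf{D}_\cS)$, the unprotected rows of $C$ are completely free: any $C_\U$ yields a consistent system, since the attack $\textbf{a}_\U(k)\triangleq\widetilde{\textbf{y}}_\U(k)-C_\U\textbf{x}(k)$ absorbs any mismatch. We therefore start from the true plant, which lies in $\Sigma(\textbf{D})$ because its data are generated with attack $\textbf{a}$. We keep $A'\triangleq\bar A$, $B'\triangleq\bar B$, $C'_\cS\triangleq\bar C_\cS$, and replace only the $i$th row by $C'_i\triangleq\bar C_i+\alpha\bar C_j$. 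Equivalently, $C'\triangleq T_\alpha\bar C$ with $T_\alpha=I_p+\alpha e_ie_j^\top$ and $\alpha\in\R$ to be chosen. Since $T_\alpha$ is invertible, $(A',C')$ is observable; in fact $(\bar A,C'_\cS)=(\bar A,\bar C_\cS)$ is already observable. Controllability is inherited, so $(A',B',C')$ is minimal. Membership in $\Sigma(\textbf{D})$ follows from Proposition~\ref{proposition:model_equivalence}, because the protected equations are unchanged along the same state trajectory $\textbf{x}(\cdot)$.

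Next we treat the online phase. The designer must commit to an input $u(0)$ (more generally a controller $\pi=\mathscr{A}(\textbf{D})$) that does not depend on which element of $\Sigma(\textbf{D})$ is the true plant. We must also ensure that the given online initialization $(u^{[-T_\p,-1]},y^{[-T_\p,-1]})$ is itself consistent with the altered system, since otherwise the designer could discriminate. For this step we use the hypothesis: the terminal state $x^\star$ is pinned down by the protected outputs through observability of $(\bar A,\bar C_\cS)$, so the altered system generates the same protected past outputs. Its unprotected past outputs, however, differ by $\alpha\bar C_j x(t)$.

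This is the main obstacle. We will handle it by noting that the constraint concerns the current and future outputs, and by working with the first controlled instant. Pick the violation at time $0$: $y'_i(0)=\bar C_i x(0)+\alpha\bar C_j x(0)$, where $x(0)=\bar A x^\star+\bar B u(-1)$. If the past online unprotected samples are treated as part of the given data, we instead evaluate at the terminal time $-1$: $y'_i(-1)=\bar C_ix^\star+\alpha\bar C_jx^\star$. Since $\bar C_jx^\star\neq0$, this quantity is affine in $\alpha$ with nonzero slope, so we can choose $|\alpha|$ large enough to push it outside the bounded interval imposed by $\mathcal{C}_y$ on component $i$. The cleanest route is to place the violation at the first future step, so that it is independent of past consistency. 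Concretely, we argue that for any fixed input the future output $y'_i(t)=\bar C_i x(t)+\alpha\bar C_jx(t)$ is unbounded in $\alpha$ whenever $\bar C_j x(t)\neq0$. The hypothesis on $x^\star$ guarantees that such a $t$ exists: either directly, or by continuity and invariance of the state to the controller's choice at the terminal instant.

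Finally, we conclude as in Proposition~\ref{proposition:algorithm_equivalence}. For any design rule, the produced controller is the same for all members of $\Sigma(\textbf{D})$. For suitable $\alpha$, the member $(A',B',C')$ violates the $i$th output constraint, so no rule certifies the hard constraint uniformly over $\Sigma(\textbf{D})$.
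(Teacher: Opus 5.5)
Your construction of $(A',B',C')$ matches the paper's: keep $\bar A,\bar B,\bar C_\cS$, set $C'=T_\alpha\bar C$, get minimality from invertibility of $T_\alpha$, and get membership in $\Sigma(\textbf{D})$ from Proposition~\ref{proposition:model_equivalence}. The final step, exhibiting a violation \emph{during online operation}, has a genuine gap.

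You read the terminal state as $x^\star=x(-1)$, so the hypothesis only gives $\bar C_j x(-1)\neq 0$. You then never establish an online time $t\ge 0$ with $\bar C_j x(t)\neq 0$. The appeal to ``continuity and invariance of the state to the controller's choice'' is not an argument, and under your reading the needed claim is false. The state $x(0)=\bar A x^\star+\bar B u(-1)$ can be zero. Then the rule $u\equiv 0$ gives $y'(t)=T_\alpha\bar C x(t)=0$ for all $t\ge 0$ and every $\alpha$, so no member of your family violates an interval containing $0$.

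Your fallback to $t=-1$ fails too. That instant lies in the given initialization window, not in online operation. Moreover, if the unprotected initialization samples are treated as given, then $y'_i(-1)=y_i(-1)+\alpha\bar C_j x^\star\neq y_i(-1)$, which makes $(A',B',C')$ inconsistent with them; this is exactly the obstacle you raised. For $t\ge 1$, the premise ``for any fixed input'' is unavailable in closed loop. The controller's inputs may depend on online measurements of $y_i$, which under $(A',B',C')$ depend on $\alpha$.

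The missing point is that the terminal state is the state at the activation time, $x^\star=x(0)$. It is reached from the same initial state at $-T_\p$ after applying $u^{[-T_\p,-1]}$, and this is how the paper uses the hypothesis. Since there is no feedthrough, $y'_i(0)=C'_ix^\star=(\bar C_i+\alpha\bar C_j)x^\star$ is fixed before the controller acts at all. It is affine in $\alpha$ with nonzero slope $\bar C_j x^\star$, so a large $|\alpha|$ of the right sign forces a violation at $t=0$ for every design rule. No closed-loop or future-time reasoning is needed.

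Consistency of the initialization then only needs checking for the protected part $(u^{[-T_\p,-1]},y_\cS^{[-T_\p,-1]})$. It holds simply because $A'=\bar A$, $B'=\bar B$, $C'_\cS=\bar C_\cS$, and the initial state at $-T_\p$ is the same. Observability of $(\bar A,\bar C_\cS)$ is not what makes the protected past outputs coincide, contrary to your remark that $x^\star$ is pinned down by them.
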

\begin{proof}
	Let $ [\overline{c}_i, \underline{c}_i] $ be the bounded interval constraint imposed on the $ i $th output.
	Let $ (u^{[-T_\p, -1]}, y^{[-T_\p, -1]}) $ be the assumed online sequence of the true plant, and let $ x^\star = x(0) $ be its terminal state satisfying $ \bar C_j x^\star \neq 0 $ for some $ j \in \cS $.
	As with Proposition~\ref{proposition:impossibility_detection}, for $ \alpha \neq 0 $, define $ T_\alpha \triangleq I_p + \alpha e_i e_j^\top $ and $ C' \triangleq T_\alpha \bar C $.
	Since $ i \in \U $ and $ j \in \cS $, it holds that $ C'_\cS = \bar C_\cS $.
	Considering $ A' \triangleq \bar A $ and $ B' \triangleq \bar B $, $ (A', B', C') $ is minimal.
	
	Let $ \textbf{x}(\cdot) $ be the offline state trajectory of $ (\bar A,\bar B) $ that generates the protected dataset $ \textbf{D}_\cS $ under input $ \textbf{u} $ and initial condition $ \textbf{x}(0) $.
	Since the protected rows are unchanged, the same trajectory also certifies that $ (A',B',C') \in \Sigma_\cS(\textbf{D}_\cS) $.
	By Proposition~\ref{proposition:model_equivalence}, $ \Sigma_\cS(\textbf{D}_\cS) = \Sigma(\textbf{D}) $ yields $ (A', B', C')\in \Sigma(\textbf{D}) $.
	
	Next, consider the same initial state at time $ -T_\p $ as in the true plant and apply the same online input sequence $ u^{[-T_\p, -1]} $.
	Since the state dynamics are unchanged, the resulting online state trajectory is identical to that of the true plant, and in particular, it reaches the same terminal state $ x^\star $ at time $ 0 $.
	Moreover, because $ C'_\cS = \bar C_\cS $, the protected online outputs over $ [-T_\p, -1] $ are unchanged.
	Hence, the same protected online sequence $ (u^{[-T_\p, -1]}, y^{[-T_\p, -1]}_\cS) $ is also consistent with $ (A',B',C') $.
	
	Under $ (A',B',C') $, the $ i $th output at time $ 0 $ is $ y'_i(0) = C'_i x^\star = \left(\bar C_i + \alpha \bar C_j \right)x^\star $.
	Since $ \bar C_j x^\star \neq 0 $, we can choose $ \alpha $ with sufficiently large magnitude and appropriate sign so that $ y'_i(0) \notin [\overline{c}_i, \underline{c}_i] $, which implies the violation of the $ i $th output constraint at time $ 0 $.
\end{proof}

This proposition does not assert that the true plant necessarily violates the hard constraint. 
Rather, it shows that the dataset $ \textbf{D} $ cannot exclude consistent plants that do. 
This is the precise sense in which hard safety on unprotected outputs is not guaranteed under arbitrary offline poisoning attacks.
This suggests a security-by-design principle: every output channel that enters hard safety constraints should be included in the protected storage.

\section{Secure DeePC}
\label{section:SecureDeePC}

Proposition \ref{proposition:algorithm_equivalence} justifies that, under arbitrary offline poisoning attacks, ignoring the unprotected data $ \widetilde{\textbf{y}}_\U $ is a minimax-optimal choice.
We turn this limitation into a design principle by proposing \textit{Secure DeePC}, whose basic ideas can be summarized as follows:
\begin{itemize}
	\item Until the online input data become PE, compute the output-truncated DeePC that uses only the protected dataset $ \textbf{D}_\cS $, taking into account the fundamental limitations of Propositions~\ref{proposition:algorithm_equivalence}--\ref{proposition:constraints}. In this phase, a small dither signal is added to the input to enhance PE.
	\item Once the online data become PE, construct a linear mapping from $ (u,y_\cS) $ to $ y_\U $. Using this linear mapping, reconstruct the offline, attack-free data within a specified interval. Simultaneously, detect and identify any attacks present in the offline data during the interval.
	\item After the online input data become PE, execute the original DeePC (\ref{eq:DeePC}) using the reconstructed offline data. 
\end{itemize}

\begin{algorithm}[t]
	\caption{Secure DeePC}
	\label{algorithm:Secure_DeePC}
	\begin{algorithmic}[1]
		\Statex \hspace{-6mm} \textbf{Input:} offline data $ \textbf{D} = (\textbf{u}, \widetilde{\textbf{y}}_\cS,\widetilde{\textbf{y}}_\mathcal{U}  )  $, $ \bar \cC_u $, $ \bar \cC_{y} $, $ \cC_u $, $ \cC_y $, $T_\p$, $T_\f $, objective functions $ J_\cS $ and $ J $, and $ \W \subset\R^m $.
		\State $ \tau_{\rm pe} \leftarrow +\infty $.
		\State Construct $ (\textbf{U}^\p, \widetilde{\textbf{Y}}^\p_\cS, \textbf{U}^\f, \widetilde{\textbf{Y}}^\f_\cS) $ using $ (\textbf{u}, \widetilde{\textbf{y}}_\cS) $.
		\For{each $ k \in \Z^+ $}
		\State Retrieve past I/O data $ (u^\p, y^\p) $.
		\If{$ k < \tau_{\rm pe} $}
		\If{$ u^{[-T_\p,k-1]} $ is PE of order $ 2T_\p $}
		\State Set $ \tau_{\rm pe} \leftarrow k $; $ \tau_{\rm sw} \leftarrow k+1 $.
		\EndIf
		\State Compute (\ref{eq:Secure_temp_DeePC}) using $ (\textbf{U}^\p, \widetilde{\textbf{Y}}^\p_\cS, \textbf{U}^\f, \widetilde{\textbf{Y}}^\f_\cS) $ for $ u^\f $. 
		\State Apply $ u(k) \!=\! u^\f(k) + w(k) $, where $ w(k) $ is an i.i.d. random signal satisfying $ w(k) \!\in\! \mathcal{W} $.
		\EndIf
		\If{$ k = \tau_{\rm sw} $}
		\State Compute $ F(\tau_{\rm sw} ) $ as $ F(\tau_{\rm sw} ) = Y_\U(\tau_{\rm sw} ) Z(\tau_{\rm sw} )^\dagger $. 
		\State Recover $ \widehat{\textbf{y}}_\U(k)=F(\tau_{\rm sw} )\textbf{z}(k) $ for $ k \in [T_\p,N\!-\!1] $. 
		\State Construct $ (\widehat{\textbf{U}}^\p, \widehat{\textbf{Y}}^\p,\widehat{\textbf{U}}^\f, \widehat{\textbf{Y}}^\f) $ using the \textit{recovered} dataset $ \widehat{\textbf{D}} = (\textbf{u}^{[T_\p,N-1]},\widetilde{\textbf{y}}^{[T_\p, N-1]}_\cS, \widehat{\textbf{y}}^{[T_\p, N-1]}_\U) $.
		\State Identify the attacks as $ \widehat{\textbf{a}}_\U(k) = \widetilde{\textbf{y}}_\U(k) - \widehat{\textbf{y}}_\U(k) $. 
		\EndIf
		\If{$ k \geq \tau_{\rm sw} $}
		\State Compute (\ref{eq:DeePC}) using $ (\widehat{\textbf{U}}^\p, \widehat{\textbf{Y}}^\p,\widehat{\textbf{U}}^\f, \widehat{\textbf{Y}}^\f) $ for $ u^\f $. 
		\State Apply $ u(k) = u^\f(k) $.
		\EndIf
		\EndFor
	\end{algorithmic}
\end{algorithm}	

Algorithm~\ref{algorithm:Secure_DeePC} presents the Secure DeePC framework.
As with the original DeePC, at the activation time $ k=0 $, the past measurements $ u^{[-T_\p, -1]} $ and $ y^{[-T_\p, -1]} $ are given.
This algorithm has a two-phase structure.
In Phase I $ (k < \tau_{\rm sw} ) $, the controller relies exclusively on the protected dataset $ \textbf{D}_{\cS} $ and enforces only the protected-output constraints, following Propositions~\ref{proposition:algorithm_equivalence}--\ref{proposition:constraints}.
The time $ \tau_{\rm sw} $, which is defined as $ \tau_{\rm sw} \triangleq \tau_{\rm pe} + 1 $, indicates the switching time from Phase I to Phase~II, where $ \tau_{\rm pe} $ denotes the first time at which the online input data $ u^{[-T_\p, \tau_{\rm pe}-1]} $ is PE of order $ 2T_\p $.
Then, in Phase II $ (k \geq \tau_{\rm sw}) $, Secure DeePC learns the linear mapping $ \bar F $ from $ (u,y_\cS) $ to $ y_\U $ and uses it to reconstruct the attack-free offline unprotected outputs; then it solves the full-output DeePC (\ref{eq:DeePC}) using the recovered dataset.

In what follows, we explain the details of this algorithm and provide some theoretical results.

\subsection{$ k < \tau_{\rm sw} :$ Truncated DeePC with Active Input Excitation}
\label{subsection:first_phase}

Our starting point is the output-truncated DeePC that uses Hankel matrices built from the protected data $ \textbf{D}_\cS $:
\begin{subequations}
	\label{eq:Secure_temp_DeePC}
	\begin{align}
		& \minimize_{g,u^\f,y^\f_\cS}~~J_\cS\left(u^\f,y^\f_\cS\right) \\
		\label{eq:Secure_temp_DeePC_const}
		&\mathrm{s.t.}~\left[\!\!
		\begin{array}{c}
			\textbf{U}^\p \\ \widetilde{\textbf{Y}}^\p_\cS \\ \textbf{U}^\f \\ \widetilde{\textbf{Y}}^\f_\cS
		\end{array}\!\!\right]g = \left[\!\!
		\begin{array}{c}
			u^\mathsf{p} \\ y^\mathsf{p}_\cS \\ u^\f \\ y^\f_\cS
		\end{array}\!\!\right],~u^\f \in \bar \cC_u^{T_\f},~y^\f_\cS \in \bar \cC_{y}^{T_\f},
	\end{align}
\end{subequations}
where $ y^\p_\cS = y_\cS^{[k-T_\p,k-1]}$ is the $ T_\p $ most recent past output measurements in the set $ \cS $ from the system (\ref{eq:system_model}).
Problem~(\ref{eq:Secure_temp_DeePC}) imposes constraints only on input and protected output components, following Proposition~\ref{proposition:constraints}.
We define the induced constraint on the protected outputs as
\begin{align*}
	\bar \cC_{y} \triangleq \mathrm{proj}_{\cS}\left(\cC_{y}\right) = \left\lbrace y_\cS \in \R^{|\cS|}: \exists \psi \in \cC_y~\mathrm{s.t.}~y_\cS = \psi_\cS \right\rbrace.
\end{align*}
The set $ \bar{\cC}_{y}^{T_\f} $ is the $ T_\f $-fold Cartesian product of $\bar \cC_{y} $.
The decision variable $ y^\f_\cS $ is given as $ y^\f_\cS \triangleq \col(y^\f_\cS(k),\ldots,y_\cS^\f(k+T_\f-1)) \in \R^{|\cS|T_\f}$.
The objective function $ J_\cS(u^\f,y^\f_\cS) $ is also defined by only considering the outputs in $ \cS $.

This output-truncated DeePC implementation is intuitive, motivated by the fundamental limitations; however, because constraints and references concerning unprotected components are not applied, performance and safety cannot be ensured over time.
Hence, in our Secure DeePC framework, we reconstruct (sanitize) the offline dataset and then run the original DeePC on the reconstructed data to ensure the same performance and safety with MPC within a finite time.

As shown next, once the online input data is PE of order $ 2T_\p $, one can recover the offline, attack-free, unprotected data $ \textbf{y}_\U $ under certain conditions.
To achieve the excitation condition as soon as possible, we adopt the active input excitation strategy.
Specifically, the input in Phase I is computed as $ u(k) = u^\f(k) + w(k) $,
where $ u^\f(k) $ is computed by (\ref{eq:Secure_temp_DeePC}) with the input constraint $ \bar \cC_{u} \triangleq \cC_{u} \ominus \W $, where $ \ominus $ denotes the Pontryagin set difference, and $ w(k) \in \W $ denotes an i.i.d. random signal with an absolutely continuous distribution supported on $ \W \subset \R^m $ satisfying $ 0 \in \mathrm{int}\,\W $ (e.g., uniform on $ \W $).
This constraint reformulation is needed to ensure input safety even under the active input excitation.
By adding the i.i.d. signal $ w(k) $ to the input, the input sequence $ u^{[-T_\p,k-1]} $ is PE of order $ 2T_\p $ almost surely if $ k \geq 2(m+1)T_\p - 1 $ \cite[Lemma 1]{2026ShinoharaTCNS}.
In practice, the magnitude of $ w(k) $ can be small enough to preserve nominal closed-loop performance, while still providing the excitation required for the rank computation.

\subsection{$ k = \tau_{\rm sw}: $ Offline Data Reconstruction}
\label{subsection:second_phase}
We next present the procedure for reconstructing the offline attack-free dataset.
For each $k\in \Z^+$ in an \textit{online} operation, define the vector based on the most recent $ T_\p $-step online input and output measurements in $ \cS $:
\begin{align*}
	\zeta(k)\triangleq \col\left(u^{[k-T_\p,k-1]},y^{[k-T_\p,k-1]}_\cS \right)\in\mathbb{R}^{(m+|S|)T_\p}.
\end{align*}
Also, define the matrix whose columns are $z(\cdot)$:
\begin{align*}
	Z(k) \!\triangleq\! \left[\!\!\begin{array}{ccc}
		\zeta(0)\! & \!\cdots\! &\!\zeta(k-1)
	\end{array}\!\!\right] \!=\! \left[\!\!\begin{array}{c}
	\mathscr{H}_{T_\p}\left(u^{[-T_\p,k-2]}\right) \\\mathscr{H}_{T_\p}\left(y_\cS^{[-T_\p,k-2]}\right)
\end{array}\!\!\right],
\end{align*}
Define the output matrix corresponding to the set $ \U $:
\begin{align*}
	Y_\U(k) \triangleq \left[\begin{array}{ccc}
		y_{\U}(0) & \!\cdots\! & y_{\U}(k-1)
	\end{array}\right] \in \R^{|\U| \times k}.
\end{align*}
We first show the existence of a linear mapping from $ \zeta $ to $ y_\U $.
\begin{lemma}
	\label{lemma:existence_F}
	Suppose that $(\bar A,\bar C_\cS)$ is observable and $T_\p\geq n$.
	Then, there exists a constant matrix $ \bar F  \in \R^{|\U|\times(m+|\cS|)T_\p}$ such that, for any trajectory of the true plant $ (\bar A, \bar B, \bar C) $, $ y_{\U}(k) = \bar F\zeta(k) $ for all $ k \in \Z^+$.
\end{lemma}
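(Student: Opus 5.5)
We will use the standard state-reconstruction argument from a window of past inputs and protected outputs, then propagate the reconstructed state one step and apply $\bar C_\U$.

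\emph{Stacking the window.} Fix any trajectory of the true plant and any $k\in\Z^+$. Iterating (\ref{eq:system_model}) from time $k-T_\p$ gives
\begin{align*}
y_\cS^{[k-T_\p,k-1]} = \mathcal{O}_{T_\p}\, x(k-T_\p) + \mathcal{T}_{T_\p}\, u^{[k-T_\p,k-1]} .
\end{align*}
Here $\mathcal{O}_{T_\p}\triangleq\col(\bar C_\cS,\bar C_\cS\bar A,\ldots,\bar C_\cS\bar A^{T_\p-1})$ is the extended observability matrix of $(\bar A,\bar C_\cS)$. The matrix $\mathcal{T}_{T_\p}$ is the block lower-triangular Toeplitz matrix of Markov parameters $\bar C_\cS\bar A^{i}\bar B$, with zero feedthrough.

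\emph{Recovering the state.} Since $(\bar A,\bar C_\cS)$ is observable and $T_\p\ge n$, the matrix $\mathcal{O}_{T_\p}$ has full column rank $n$. Hence $\mathcal{O}_{T_\p}^\dagger\mathcal{O}_{T_\p}=I_n$, and
\begin{align*}
x(k-T_\p)=\mathcal{O}_{T_\p}^\dagger\left(y_\cS^{[k-T_\p,k-1]}-\mathcal{T}_{T_\p}u^{[k-T_\p,k-1]}\right).
\end{align*}

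\emph{Propagating forward.} Next, $x(k)=\bar A^{T_\p}x(k-T_\p)+\mathcal{C}_{T_\p}u^{[k-T_\p,k-1]}$, where
\begin{align*}
\mathcal{C}_{T_\p}\triangleq[\bar A^{T_\p-1}\bar B,\ldots,\bar A\bar B,\bar B].
\end{align*}
Then $y_\U(k)=\bar C_\U x(k)$.

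\emph{Collecting terms.} Substituting the expression for $x(k-T_\p)$ and gathering coefficients of $u^{[k-T_\p,k-1]}$ and $y_\cS^{[k-T_\p,k-1]}$ yields $y_\U(k)=\bar F\zeta(k)$ with
\begin{align*}
\bar F\triangleq\bar C_\U\left[\,\mathcal{C}_{T_\p}-\bar A^{T_\p}\mathcal{O}_{T_\p}^\dagger\mathcal{T}_{T_\p}\ \ \ \bar A^{T_\p}\mathcal{O}_{T_\p}^\dagger\,\right]\in\R^{|\U|\times(m+|\cS|)T_\p}.
\end{align*}
This matrix depends only on $(\bar A,\bar B,\bar C)$ and $T_\p$. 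It does not depend on the trajectory, the initial state, or $k$, which gives the claimed uniformity.

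\emph{Main subtlety.} The identity is valid at every $k\in\Z^+$, including $k<T_\p$, because $\zeta(k)$ uses the data on $[k-T_\p,k-1]\subseteq[-T_\p,\infty)$. The preliminary measurements $u^{[-T_\p,-1]},y^{[-T_\p,-1]}$ come from the same true-plant trajectory, so the window is always available. The other point to check is that the left inverse is exact, not approximate. This holds because observability with $T_\p\ge n$ gives full column rank, and the pseudoinverse of a full-column-rank matrix is a genuine left inverse. No excitation assumption is needed here.
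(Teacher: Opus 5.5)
Your proof is correct and follows essentially the same route as the paper. You reconstruct $x(k-T_\p)$ exactly via the left inverse $\mathcal{O}_{T_\p}^\dagger$ of the full-column-rank observability matrix of $(\bar A,\bar C_\cS)$, propagate $T_\p$ steps with the reachability matrix, and apply $\bar C_\U$, which gives the same $\bar F$ as the paper (you also add a useful remark that for $k<T_\p$ the window uses the preliminary data on $[-T_\p,-1]$).
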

\begin{proof}
	Fix $ k \in \Z^+ $ and let $x^\p\triangleq x(k-T_p)$.
	Then, by the system dynamics (\ref{eq:system_model}), we obtain
	\begin{align}
		y^{[k-T_\p,k-1]}_\cS=\mathcal{O}_{T_\p}x^\p+\mathcal{T}_{T_\p}u^{[k-T_\p,k-1]},
		\label{eq:yS_stack_proof}
	\end{align}
	where $ \mathcal{O}_{T_\p}  \in \R^{|\cS|T_\p \times n}$ is the $ T_\p $-step observability matrix of $ (\bar A, \bar C_\cS) $ and $\mathcal{T}_{T_\p}\in\mathbb{R}^{|\cS|T_\p\times mT_\p}$ is the associated lower-block-Toeplitz matrix.
	Since $(\bar A,\bar C_S)$ is observable and $T_\p\ge n$, $\mathcal{O}_{T_\p}$ has full column rank $n$.
	Hence, multiplying \eqref{eq:yS_stack_proof} by $\mathcal{O}_{T_\p}^\dagger$ gives
	\begin{align}
		x^\p=\mathcal{O}_{T_\p}^\dagger y^{[k-T_\p,k-1]}_\cS-\mathcal{O}_{T_\p}^\dagger\mathcal{T}_{T_\p}u^{[k-T_\p,k-1]}.
		\label{eq:xp_from_up_yp}
	\end{align}
	Unrolling the state dynamics from time $k-T_\p$ to $k$ yields
	\begin{align*}
		x(k)\!=\!\bar A^{T_\p}x^\p\!+\!\mathcal{R}_{T_\p}u^{[k-T_\p,k-1]},~
		\mathcal{R}_{T_\p}\triangleq
		\begin{bmatrix}
			\bar A^{T_\p-1}\bar B \!& \!\cdots\! & \!\bar B
		\end{bmatrix}\!. 
	\end{align*}
	Therefore, using $y_\U(k)=\bar C_\U x(k)$ and substituting \eqref{eq:xp_from_up_yp}, 
	\begin{align}
		y_\U(k)&= \bar C_\U\bar A^{T_\p} x^\p +\bar C_\U\mathcal{R}_{T_\p}u^{[k-T_\p,k-1]} \nonumber\\
		& =\bar F \col\left(u^{[k-T_\p,k-1]},y^{[k-T_\p,k-1]}_\cS\right)\!=\!\bar F \zeta(k)
		\label{eq:yU_equals_Fstar_z}
	\end{align}
	for all $ k\in \Z^+ $, where
	\begin{align*}
		\bar F \triangleq
		\begin{bmatrix}
			\bar C_\U\mathcal{R}_{T_\p}-\bar C_\U\bar A^{T_\p}\mathcal{O}_{T_\p}^\dagger\mathcal{T}_{T_\p}
			&
			\bar C_\U\bar A^{T_\p}\mathcal{O}_{T_\p}^\dagger
		\end{bmatrix}.
	\end{align*}
\end{proof}

The following lemma derives conditions under which the linear map can be constructed from the online measurements.
\begin{lemma}
	\label{lemma:learning_F}
	Suppose that $ (\bar A, \bar B) $ is controllable, $ (\bar A, \bar C_\cS) $ is observable, and $ T_\p \geq n $.
	Assume that the online input $ u^{[-T_\p, \tau_{\rm pe}-1]} $ is PE of order $ 2T_\p $ for some $ \tau_{\rm pe} \in \Z^+ $.
	Define $ F(\tau_{\rm sw}) \triangleq Y_\U(\tau_{\rm sw}) Z(\tau_{\rm sw})^\dagger  $ with $ \tau_{\rm sw} \triangleq \tau_{\rm pe} + 1 $.
	Then, for any trajectory of the true plant $ (\bar A, \bar B, \bar C) $, $ y_{\U}(k) = F(\tau_{\rm sw})\zeta(k) $ for all $ k \in \Z^+$.
\end{lemma}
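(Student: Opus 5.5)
By Lemma~\ref{lemma:existence_F}, there is a constant $\bar F$ with $y_\U(k)=\bar F\zeta(k)$ for all $k\in\Z^+$ and every trajectory of the true plant. Stacking these identities over $k=0,\ldots,\tau_{\rm sw}-1$ gives $Y_\U(\tau_{\rm sw})=\bar F Z(\tau_{\rm sw})$. Hence
\[
F(\tau_{\rm sw})=\bar F Z(\tau_{\rm sw})Z(\tau_{\rm sw})^\dagger=\bar F\,\Pi,
\]
where $\Pi$ is the orthogonal projector onto the row space of $Z(\tau_{\rm sw})$. For an arbitrary trajectory we then have $F(\tau_{\rm sw})\zeta(k)=\bar F\Pi\zeta(k)$. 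This equals $\bar F\zeta(k)=y_\U(k)$ as soon as $\zeta(k)$ lies in $\im Z(\tau_{\rm sw})$, the column space of $Z(\tau_{\rm sw})$, since $\Pi$ acts as the identity there. The whole proof therefore reduces to a rank statement: every vector $\zeta$ that can arise from some plant trajectory must lie in $\im Z(\tau_{\rm sw})$.

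\emph{Step 1 (where $\zeta$ lives).} By \eqref{eq:yS_stack_proof}, any window vector satisfies $\zeta=\col(u,\mathcal{O}_{T_\p}x^\p+\mathcal{T}_{T_\p}u)$. So the set of admissible $\zeta$ is the image of
\[
M\triangleq\begin{bmatrix} I & 0\\ \mathcal{T}_{T_\p} & \mathcal{O}_{T_\p}\end{bmatrix}
\]
acting on $\col(u,x^\p)$. Since $\mathcal{O}_{T_\p}$ has full column rank, $M$ has rank $mT_\p+n$. It therefore suffices to show $\mathrm{rank}\,Z(\tau_{\rm sw})=mT_\p+n$: the columns of $Z(\tau_{\rm sw})$ already lie in $\im M$, so equal ranks force equal spaces.

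\emph{Step 2 (rank from PE).} The columns of $Z(\tau_{\rm sw})$ are $\zeta(0),\ldots,\zeta(\tau_{\rm pe})$, which use inputs $u^{[-T_\p,\tau_{\rm pe}-1]}$. Write $Z(\tau_{\rm sw})=M\begin{bmatrix}\mathscr{H}_{T_\p}(u^{[-T_\p,\tau_{\rm pe}-1]})\\ X\end{bmatrix}$, where $X=[x(-T_\p),\ldots,x(\tau_{\rm pe}-T_\p)]$. Because $M$ has full column rank, it suffices that the stacked matrix of inputs and states has full row rank $mT_\p+n$. This is the state-space form of Willems' fundamental lemma \cite{2005SCLWillems}: controllability together with PE of order $T_\p+n$ implies full row rank. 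Our assumption gives PE of order $2T_\p\ge T_\p+n$ (using $T_\p\ge n$), which is at least as strong, since PE of a given order implies PE of every lower order.

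\emph{Main obstacle.} The delicate point is applying the fundamental lemma correctly to the online window. Its state matrix $X$ starts at $x(-T_\p)$ and has exactly as many columns as the Hankel matrix $\mathscr{H}_{T_\p}(u^{[-T_\p,\tau_{\rm pe}-1]})$, so the bookkeeping of indices needs to be checked. I would invoke the standard left-kernel argument of Willems et al.\ directly, rather than re-deriving it. Note also that the lemma applies to any input sequence, including the closed-loop inputs with dither used in Phase~I, because it relies only on controllability and PE and not on how the inputs were generated.

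Once the rank is established, $\Pi\zeta=\zeta$ for every admissible $\zeta$, so $F(\tau_{\rm sw})\zeta(k)=\bar F\zeta(k)=y_\U(k)$ for all $k$ and every trajectory, which proves the claim.
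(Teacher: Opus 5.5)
Your proposal is correct and follows essentially the paper's route. Lemma~\ref{lemma:existence_F} gives $Y_\U(\tau_{\rm sw})=\bar F Z(\tau_{\rm sw})$, Willems' fundamental lemma gives $\zeta(k)\in\im Z(\tau_{\rm sw})$, and hence $\bar F Z(\tau_{\rm sw})Z(\tau_{\rm sw})^\dagger\zeta(k)=\bar F\zeta(k)$. Your factorization $Z(\tau_{\rm sw})=M\,\mathrm{col}\bigl(\mathscr{H}_{T_\p}(u^{[-T_\p,\tau_{\rm pe}-1]}),X\bigr)$ with injective $M$ just spells out the fundamental-lemma step that the paper cites in one line.

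One small slip: $Z(\tau_{\rm sw})Z(\tau_{\rm sw})^\dagger$ is the orthogonal projector onto the column space of $Z(\tau_{\rm sw})$, not the row space. The column space is in fact how you use it in the next sentence, so the argument is unaffected.
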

\begin{proof}
	By Lemma~\ref{lemma:existence_F}, there exists a constant matrix $ \bar F $ such that $ y_{\U}(k) = \bar F \zeta(k) $ for all $ k $.
	Applying this relation column-wise to the online data for $ k \in [0,\tau_{\rm sw}-1] $ yields $ Y_\U(\tau_{\rm sw}) = \bar F Z(\tau_{\rm sw})  $.
	Now let $ \zeta(k) $ be the $ T_\p $-step window generated by an arbitrary trajectory of the same plant.
	Since $ (\bar A, \bar B) $ is controllable, $ T_\p \geq n $, and $ u^{[-T_\p, \tau_{\rm pe}-1]} $ is PE of order $ 2 T_\p \geq n + T_\p $, Willems' Fundamental Lemma \cite{2005SCLWillems} implies that $ \zeta(k) \in \im Z(\tau_{\rm pe}+1) = \im Z(\tau_{\rm sw})$.
	Hence,
	\begin{align*}
		&F(\tau_{\rm sw})\zeta(k) \!=\! Y_\U(\tau_{\rm sw}) Z(\tau_{\rm sw})^\dagger\zeta(k) \\
		&~~~= \bar F Z(\tau_{\rm sw})Z(\tau_{\rm sw})^\dagger\zeta(k) \!=\! \bar F\zeta(k) \!= \! y_{\U}(k),\forall k \in \Z^+.
	\end{align*}
\end{proof}

Under the conditions of Lemma~\ref{lemma:learning_F}, the linear mapping from $ \zeta $ to $ y_\U $ can be learned as  $ F(\tau_{\rm sw}) $ using the online measurements.
Note that, by adding the i.i.d. dither signal $ w(k) $ to the input in Phase I, $ u^{[-T_\p,\tau_{\rm pe}-1]} $ is PE of order $ 2T_\p $ in finite time almost surely, and accordingly, $F(\tau_{\rm sw})$ can be constructed in finite time almost surely.

Using $F(\tau_{\rm sw})$, we can reconstruct the \textit{offline}, unprotected output sequence in the interval $ [T_\p, N-1] $.

\begin{corollary}
	\label{corollary:recovered_U}
	As with $ \zeta(k) $, for the \textit{offline} data, define
	\begin{align*}
		\textbf{z}(k)\triangleq \col\left(\textbf{u}^{[k-T_\p,k-1]},~\textbf{y}^{[k-T_\p,k-1]}_\cS \right)\in\mathbb{R}^{(m+|S|)T_\p},
	\end{align*}
	for $ k \in [T_\p, N-1] $.
	If the same conditions of Lemma~\ref{lemma:learning_F} hold, then the offline, attack-free, unprotected output $ \textbf{y}_{\U}(k) $ can be recovered as $ \textbf{y}_{\U}(k) = F(\tau_{\rm sw}) \textbf{z}(k) $ for all $ k \in [T_\p, N-1] $.
\end{corollary}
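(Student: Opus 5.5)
The plan is to reduce the corollary to Lemma~\ref{lemma:learning_F} by viewing the offline experiment as just another trajectory of the true plant $ (\bar A,\bar B,\bar C) $. Lemma~\ref{lemma:learning_F} asserts that $ y_\U(k) = F(\tau_{\rm sw})\zeta(k) $ for \emph{any} trajectory of the plant, not only the online one used to build $ F(\tau_{\rm sw}) $. So the only work is to check that $ \textbf{z}(k) $ is exactly the $ \zeta $-window of such a trajectory, at a time index where the lemma applies.

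First I will fix the offline state trajectory $ \textbf{x}(\cdot) $ generated by $ (\bar A,\bar B) $ under $ \textbf{u} $ from $ \textbf{x}(0) $. The attack-free outputs are $ \textbf{y}(k)=\bar C\textbf{x}(k) $ on $ [0,N-1] $. Two facts drive the argument.
\begin{itemize}
\item The protected entries $ \textbf{y}_\cS $ are untouched by the attack, so $ \textbf{z}(k) $ is built solely from genuine plant data and is available to the operator.
\item The window $ \textbf{z}(k) $ needs samples from $ k-T_\p $ to $ k-1 $, so $ k\ge T_\p $. This is why the recovery interval is $ [T_\p,N-1] $.
\end{itemize}

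Next I will time-shift. Define $ \tilde u(t)=\textbf{u}(t+T_\p) $ and $ \tilde x(t)=\textbf{x}(t+T_\p) $ for $ t\in[-T_\p,N-1-T_\p] $. Extend $ \tilde u $ arbitrarily, e.g.\ by zero, for $ t\ge N-T_\p $ and propagate the dynamics, so that $ \tilde x $ is a legitimate trajectory on $ t\ge -T_\p $, matching the convention of the online data. This trajectory has $ \zeta $-window $ \tilde\zeta(t)=\textbf{z}(t+T_\p) $ for $ t\in[0,N-1-T_\p] $ and outputs $ \tilde y_\U(t)=\textbf{y}_\U(t+T_\p) $. Applying Lemma~\ref{lemma:learning_F} then gives $ \textbf{y}_\U(k)=F(\tau_{\rm sw})\textbf{z}(k) $ for all $ k\in[T_\p,N-1] $.

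Alternatively, I can avoid the shift and reason pointwise. The proof of Lemma~\ref{lemma:existence_F} only uses the dynamics over $ [k-T_\p,k] $, which gives $ \textbf{y}_\U(k)=\bar F\textbf{z}(k) $ directly. The step of Lemma~\ref{lemma:learning_F} then shows $ \textbf{z}(k)\in\im Z(\tau_{\rm sw}) $, so $ F(\tau_{\rm sw})\textbf{z}(k)=\bar F Z(\tau_{\rm sw})Z(\tau_{\rm sw})^\dagger\textbf{z}(k)=\bar F\textbf{z}(k) $.

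The only point needing care is this membership $ \textbf{z}(k)\in\im Z(\tau_{\rm sw}) $ for an offline window. It follows from the Fundamental Lemma, since $ u^{[-T_\p,\tau_{\rm pe}-1]} $ is PE of order $ 2T_\p\ge n+T_\p $ and the system is controllable. Together these imply that the depth-$ T_\p $ Hankel matrices $ Z(\tau_{\rm sw}) $ span every length-$ T_\p $ trajectory of $ (\bar A,\bar B,\bar C_\cS) $, and $ \textbf{z}(k) $ is one such trajectory. I expect no real obstacle beyond this bookkeeping.
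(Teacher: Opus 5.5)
Your proposal is correct and follows the paper's proof. The paper likewise treats the offline experiment as another trajectory of the same plant, notes $\widetilde{\textbf{y}}_\cS=\textbf{y}_\cS$, and applies Lemma~\ref{lemma:learning_F} directly. Your time shift, and your pointwise alternative via $\bar F$ and $\textbf{z}(k)\in\im Z(\tau_{\rm sw})$, just spell out the index bookkeeping the paper leaves implicit.
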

\begin{proof}
	The offline experiment is another trajectory of the same LTI plant.
	Since the protected data are attack-free, $ \widetilde{\textbf{y}}_\cS=\textbf{y}_\cS $.
	Applying Lemma~\ref{lemma:learning_F} to the offline trajectory yields $\textbf{y}_\U(k)=F(\tau_{\rm sw})\textbf{z}(k)$ for all $k\in [T_\p,N-1]$.
\end{proof}

Define the recovered, offline, partial data by $ \widehat{\textbf{y}}_\U(k) \triangleq F(\tau_{\rm sw}) \textbf{z}(k) $ for $ k \in [T_\p, N-1] $.
Then, we can reconstruct the attack-free offline dataset from time $ T_\p $ to $ N-1 $ as $ \widehat{\textbf{D}} \triangleq (\textbf{u}^{[T_\p,N-1]},\widetilde{\textbf{y}}^{[T_\p, N-1]}_\cS, \widehat{\textbf{y}}^{[T_\p, N-1]}_\U) $ (Line~15 in Algorithm~\ref{algorithm:Secure_DeePC}).

It should be emphasized that the recovered data $ \widehat{\textbf{y}}_\U  $ can be used for the detection and identification of poisoning attacks, i.e., the poisoning attacks can be identified by $ \widehat{\textbf{a}}_\U(k) = \widetilde{\textbf{y}}_\U(k) - \widehat{\textbf{y}}_\U(k) $ for all $ k \in [T_\p,N-1] $ (Line 16 in Algorithm~\ref{algorithm:Secure_DeePC}).
Note that this detection and identification result does not contradict Proposition~\ref{proposition:impossibility_detection} because the online data serve as attack-free side information.

\subsection{$ k \geq \tau_{\rm sw}: $ Original DeePC Using Reconstructed Dataset}
Once the attack-free dataset $ \widehat{\textbf{D}} $ is reconstructed, we can use the original DeePC (\ref{eq:DeePC}) based on the dataset while considering the reference signals and constraints for all input and output channels (Line~19 in Algorithm~\ref{algorithm:Secure_DeePC}).
Consequently, we have the following theorem showing the finite-time recovery and MPC-equivalent operation of Secure DeePC.
\begin{theorem}
	\label{theorem:secure_deepc}
	Suppose that $ (\bar A, \bar B) $ is controllable, $ (\bar A, \bar C_{\cS}) $ is observable, and $ T_\p \geq n $.
	Assume further that the partial offline input $ \textbf{u}^{[T_\p, N-1]} $ is PE of order $ n + T_\p + T_\f $.
	Secure DeePC (Algorithm~\ref{algorithm:Secure_DeePC}) achieves MPC-equivalent performance and constraint satisfaction after a finite transient time almost surely if the algorithm is feasible at the switching time $ \tau_{\rm sw} $ and recursively feasible thereafter.
\end{theorem}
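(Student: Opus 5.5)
The proof splits into three steps: show that the switch to Phase II happens in finite time almost surely, show that the reconstructed dataset is exactly attack-free, and then invoke the DeePC--MPC equivalence of \cite[Corollary 5.1]{2019ECCDorfler} on that dataset.

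\emph{Finite switching time.} In Phase I the applied input is $u(k)=u^\f(k)+w(k)$, where $w(k)$ is i.i.d. with an absolutely continuous distribution and $0\in\mathrm{int}\,\W$. By \cite[Lemma 1]{2026ShinoharaTCNS}, $u^{[-T_\p,k-1]}$ is PE of order $2T_\p$ almost surely once $k\geq 2(m+1)T_\p-1$. Hence $\tau_{\rm pe}\leq 2(m+1)T_\p-1<\infty$ and $\tau_{\rm sw}=\tau_{\rm pe}+1$ is finite almost surely. Lemma~\ref{lemma:learning_F} cannot be applied directly, because the inputs $u^\f(k)$ are produced by the feedback law (\ref{eq:Secure_temp_DeePC}) and depend on past outputs. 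I would handle this by conditioning on the past at each step. Then $u^\f(k)$ is a measurable function of $(w(0),\ldots,w(k-1))$ and the initial data, while $w(k)$ is independent of it and has a density. So each new Hankel column lies outside any given proper subspace with probability one. I expect this to be the main obstacle; I would cite the lemma and only verify that its hypotheses allow causal feedback inputs. Also, since $w(k)\in\W$ and $u^\f(k)\in\cC_u\ominus\W$, we get $u(k)\in\cC_u$ throughout Phase I, so the transient respects the input constraints.

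\emph{Exact reconstruction.} On the almost-sure event above, the hypotheses of Lemma~\ref{lemma:learning_F} hold: $(\bar A,\bar B)$ is controllable, $(\bar A,\bar C_\cS)$ is observable, $T_\p\geq n$, and the input is PE of order $2T_\p$. Corollary~\ref{corollary:recovered_U} then gives $\widehat{\textbf{y}}_\U(k)=\textbf{y}_\U(k)$ for all $k\in[T_\p,N-1]$. Since $\widetilde{\textbf{y}}_\cS=\textbf{y}_\cS$, the dataset $\widehat{\textbf{D}}$ equals the attack-free I/O trajectory $(\textbf{u}^{[T_\p,N-1]},\textbf{y}^{[T_\p,N-1]})$ of the true plant on that window. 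Consequently the Hankel matrices $(\widehat{\textbf{U}}^\p,\widehat{\textbf{Y}}^\p,\widehat{\textbf{U}}^\f,\widehat{\textbf{Y}}^\f)$ coincide with those a clean offline experiment starting at time $T_\p$ would produce. This holds regardless of the attack sequence $\textbf{a}$.

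\emph{MPC equivalence.} For $k\geq\tau_{\rm sw}$ the algorithm solves (\ref{eq:DeePC}) using these clean Hankel matrices, with the full cost $J$ and the full constraints $\cC_u,\cC_y$. The plant is minimal, since $(\bar A,\bar C_\cS)$ observable implies $(\bar A,\bar C)$ observable. Moreover $T_\p\geq n$ and $\textbf{u}^{[T_\p,N-1]}$ is PE of order $n+T_\p+T_\f$. By Willems' Fundamental Lemma \cite{2005SCLWillems}, the image of the stacked Hankel matrix therefore equals the set of all length-$(T_\p+T_\f)$ trajectories. Because $T_\p\geq n$, the past window $(u^\p,y^\p)$ fixes the initial state uniquely. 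Applying \cite[Corollary 5.1]{2019ECCDorfler}, the feasible set and optimizers of (\ref{eq:DeePC}) coincide with those of the MPC problem with horizon $T_\f$. The applied inputs and resulting outputs therefore equal the MPC ones, and constraint satisfaction follows from feasibility at $\tau_{\rm sw}$ together with the assumed recursive feasibility. Combining the three steps, MPC-equivalent operation holds for all $k\geq\tau_{\rm sw}$, where $\tau_{\rm sw}<\infty$ almost surely.
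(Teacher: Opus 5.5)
Your proposal is correct and follows the paper's proof step for step. The steps are: $\tau_{\rm pe}<\infty$ almost surely via the cited dither lemma; exact recovery of $\textbf{y}_\U$ on $[T_\p,N-1]$ via Lemma~\ref{lemma:learning_F} and Corollary~\ref{corollary:recovered_U}; and then \cite[Corollary 5.1]{2019ECCDorfler} on the clean Hankel matrices under the assumed feasibility.

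One correction to your side remark: Lemma~\ref{lemma:learning_F} is a deterministic statement that holds for closed-loop data once the input is PE, so the feedback concern applies only to the almost-sure PE claim. Even there, your heuristic that each new Hankel column avoids any given proper subspace almost surely is too strong, since only its last block carries the fresh dither $w(k)$. Relying on the cited lemma rather than that sketch is therefore the right call.
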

\begin{proof}
	By the dithered input policy and \cite[Lemma 1]{2026ShinoharaTCNS}, the online input becomes PE of order $ 2T_\p $ after finite steps almost surely; namely, $ \tau_{\rm pe} < \infty $ almost surely.
	When $ k = \tau_{\rm sw} $, by Lemma~\ref{lemma:learning_F}, the linear mapping can be constructed by $ F(\tau_{\rm sw}) \triangleq Y_\U(\tau_{\rm sw}) Z(\tau_{\rm sw})^\dagger  $.
	Then, by Corollary~\ref{corollary:recovered_U}, the offline, attack-free output data can be reconstructed as $\widehat{\textbf{y}}_\U(k)=\textbf{y}_\U(k)=F(\tau_{\rm sw})\textbf{z}(k)$ for all $k\in [T_\p,N-1]$.
	Thus, the reconstructed dataset $ \widehat{\textbf{D}} $ is equivalent to the attack-free dataset $ (\textbf{u}^{[T_\p,N-1]},{\textbf{y}}^{[T_\p, N-1]}_\cS, {\textbf{y}}^{[T_\p, N-1]}_\U) $ for the interval $ [T_\p, N-1] $.
	
	For Phase II of Secure DeePC (i.e., $ k \geq \tau_{\rm sw} $), since $ (\bar A, \bar B) $ is controllable, $ (\bar A, \bar C) $ is observable, $ T_\p \geq n $, and $ \textbf{u}^{[T_\p, N-1]} $ is PE of order $ n+T_\p +T_\f $, the optimal control sequence and corresponding system output obtained by this DeePC coincide with those of the MPC problem with prediction horizon $ T_\f $ \cite[Corollary 5.1]{2019ECCDorfler}.
	Provided that the algorithm is feasible at the switching time $ \tau_{\rm sw} $ and recursively feasible thereafter, this implies MPC-equivalent performance and constraint satisfaction of Secure DeePC in Phase II.
\end{proof}

Once the online input becomes sufficiently exciting (which is achievable almost surely in finite time), the attack-free offline data in $ [T_\p, N -1] $ can be recovered exactly from the protected outputs.
The Secure DeePC controller thereafter coincides with the original full-output DeePC and hence with the corresponding MPC under the conditions of Theorem~\ref{theorem:secure_deepc}.

The observability condition on $(\bar A, \bar C_\cS)$ is needed to guarantee performance in Phase I.
If an attack-free dataset is available and the offline input data is PE of order $ n + T_\p $ with $ T_\p \geq n $, one can select an optimal placement of $ \cS $ so that $ (\bar A, \bar C_\cS) $ is observable from the dataset by using the rank-condition of Hankel matrices \cite[Proposition 1]{2022L-CSSWaarde}.

This paper focuses on noise-free systems.
Since the proposed method relies on the standard DeePC technique, the natural extension to noisy systems is to incorporate regularized DeePC formulations, which are developed in \cite{2019ECCDorfler}.
However, in the presence of noise, exact recovery of the linear mapping $ \bar F $ cannot be achieved in general, so another approach is needed to guarantee attack resilience against poisoning attacks.
This is left for future work.

\section{Numerical Examples}
\label{section:Simulation}

%
We illustrate the attack-resilience performance of Secure DeePC by comparing it with the regularized DeePC \cite{2019ECCDorfler} and output-truncated DeePC (\ref{eq:Secure_temp_DeePC}) via numerical simulations. 
We use an inverted pendulum system depicted in Fig.~\ref{fig:pendulum}.

\subsection{Settings}

\begin{figure}[t]
	\begin{center}
		\includegraphics[width=0.45\linewidth]{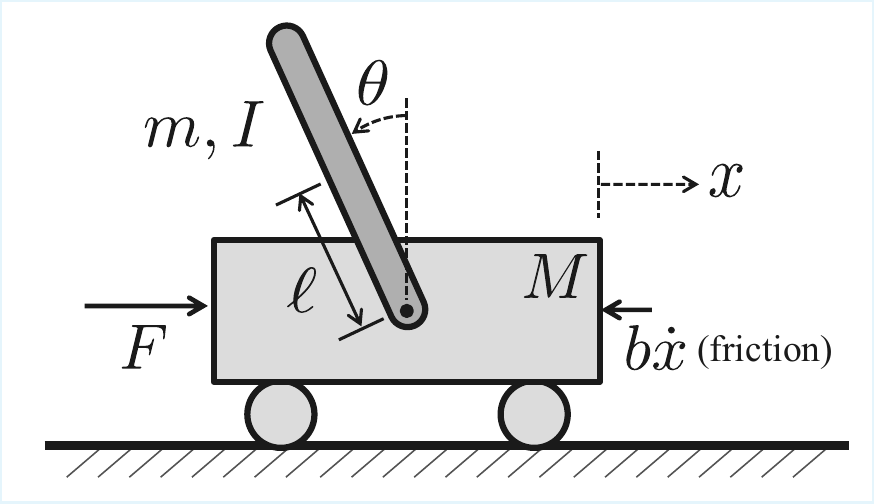}
		\vspace{-2.5mm}
		\caption{Inverted pendulum.}
		\label{fig:pendulum}
		\vspace{-6.5mm}
	\end{center}
\end{figure}

We adopt a linearized model.
The physical parameters are chosen as follows: $ M = 0.5~\mathrm{kg} $, $ m = 0.2~\mathrm{kg} $, $ \ell = 0.3~\mathrm{m} $, $ I = 0.006~\mathrm{kg\cdot m^2} $, and $ b = 0.1~\mathrm{N\cdot s/m} $.
The states of the system are given as $ (x, \dot{x}, \theta, \dot{\theta}) $ and the input is given as $ u = F $.
Assume that the sensors measure $ (x, \theta, \dot{x}). $
Then, the discretized system with the sampling time $ T_s=0.1~\mathrm{s}$ is characterized by (\ref{eq:system_model}).  

We collect $ N = 120 $ data samples during an offline experiment.
The offline input sequence $ \textbf{u} $ is generated randomly, which is PE of sufficiently high order.
For the output data, we assume that only the first two output channels are protected, i.e., $ \cS = \{1,2\} $ and $ \mathcal{U}=\{3\} $.
Referring (\ref{eq:attack_proposition}), we design the poisoning attack sequence as $ \textbf{a}_3(k) = -\alpha \textbf{y}_1(k) $ for all $ k \in[0,119] $ with $ \alpha = 1 $.

The regularized DeePC uses the full dataset $ \textbf{D} $, including the poisoned data.
The output-truncated DeePC uses only the protected dataset $ \textbf{D}_\cS $.
For the objective function $ J(u^\f, y^\f) $, we set $ Q = \mathrm{diag}(10^2,10^2,10^2) $ and $ R = 1 $.
We impose the following constraints:
\begin{align*}
	\mathcal{C}_u \!=\! \left\lbrace u\!:\! -5 \!\leq\! u \!\leq\! 5\right\rbrace\!,~\mathcal{C}_y \!=\! \left\lbrace \!\!\left[\!\begin{smallmatrix}
		y_1 \\ y_2 \\ y_3
	\end{smallmatrix}\!\right]: \left[\!\begin{smallmatrix}
	-4 \\ -\frac{\pi}{6} \\ -1
	\end{smallmatrix}\!\right] \leq \left[\!\begin{smallmatrix}
	y_1 \\ y_2 \\ y_3
\end{smallmatrix}\!\right] \leq  \left[\!\begin{smallmatrix}
4 \\ \frac{\pi}{6} \\ 1
\end{smallmatrix}\!\right] \right\rbrace.
\end{align*}
The regularization parameters (for details, see \cite{2019ECCDorfler}) are set as $ \lambda_g = 1 $ and $ \lambda_\sigma= 10^2 $.
%

%

\begin{figure*}[t!]
	\begin{minipage}[b]{0.3\linewidth}
		\includegraphics[keepaspectratio, scale=0.48]{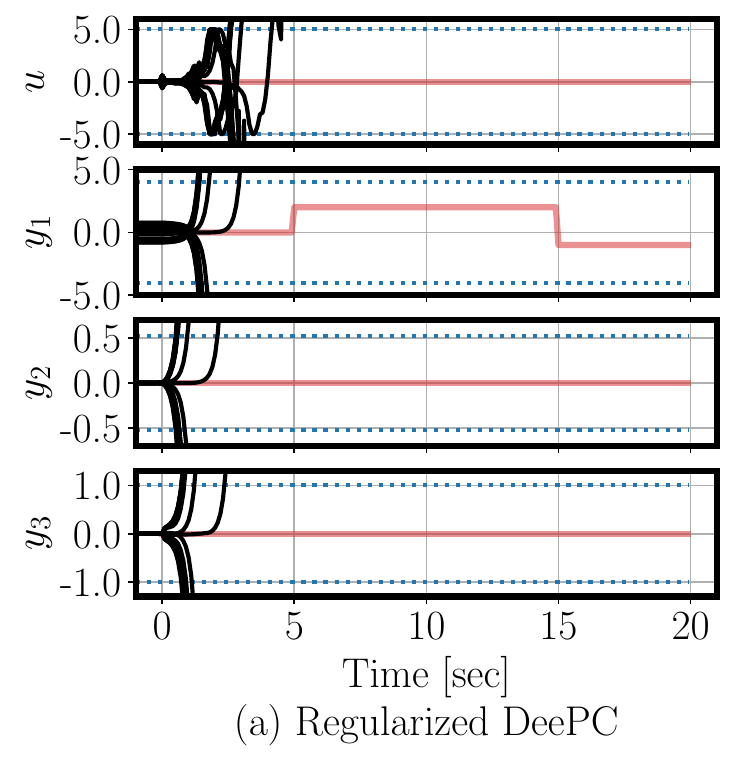}
	\end{minipage}
	\hspace{4mm}
	\begin{minipage}[b]{0.3\linewidth}
		\includegraphics[keepaspectratio, scale=0.48]{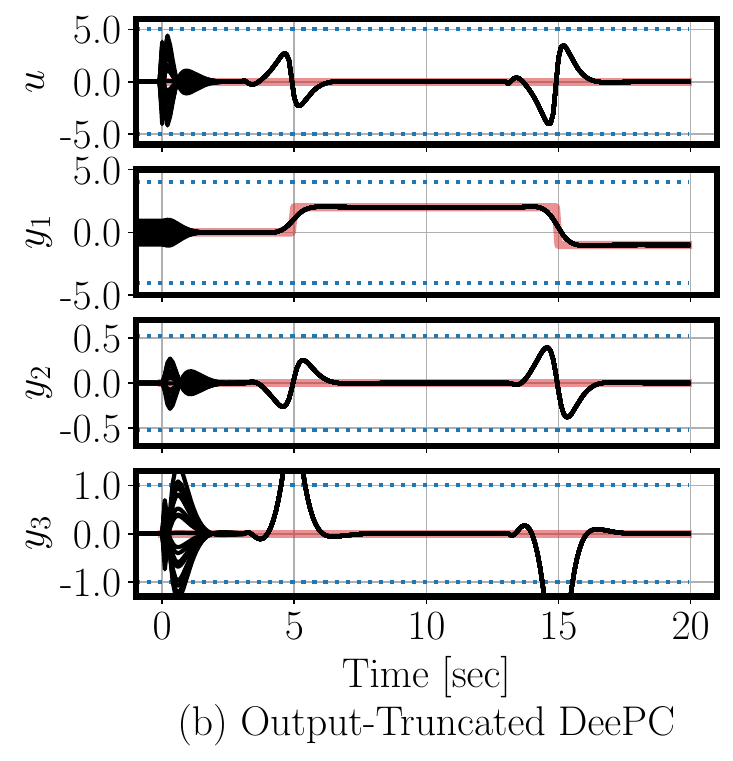}
	\end{minipage}
	\hspace{4mm}
	\begin{minipage}[b]{0.3\linewidth}
		\includegraphics[keepaspectratio, scale=0.48]{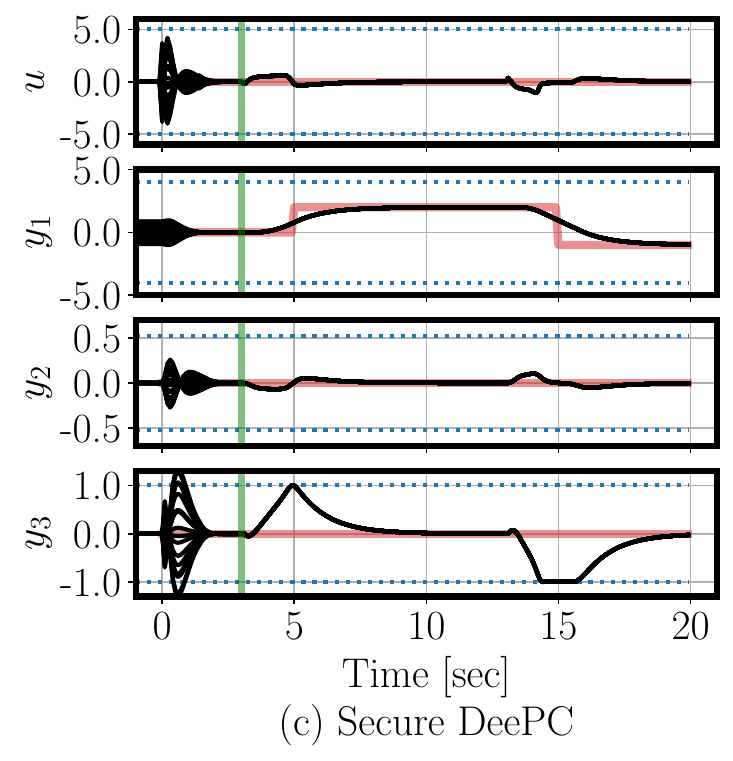}
	\end{minipage}
	\vspace{-2.5mm}
	\caption{I/O trajectories  (solid black) of the regularized DeePC, the output-truncated DeePC, and Secure DeePC under poisoning attacks for 20 trials. The solid red lines indicate the reference signals while the dotted blue lines indicate the constraints.
	The green lines in (c) indicate the first time $ \tau_{\rm pe} $ at which the online input data is PE of order $ 2 T_\p $, which was the same in all 20 trials.
	After one step of $ \tau_{\rm pe} $, Secure DeePC reconstructs the offline attack-free data and thereafter uses the reconstructed  data.}
	\label{fig:simulation_result}
	
	\vspace{-3.5mm}
\end{figure*}

\subsection{Results}
Fig.~\ref{fig:simulation_result} compares the closed-loop I/O trajectories of 20 trials obtained with regularized DeePC, the output-truncated DeePC, and Secure DeePC under offline data poisoning, with $T_\p = 10$ and $T_\f = 20$.
For each trial, the initial position is randomly chosen from $ -1 $ to $ 1 $, and the other initial parameters are set to $ 0 $.
Each DeePC operates from time $ k = 0 $.
In Phase I of Secure DeePC, we add the random dither $ w(k) $, drawn from the uniform distribution on $ [-10^{-4},10^{-4}] $, to the system.

The regularized DeePC (Fig.~\ref{fig:simulation_result}(a)) exhibits a diverging response in all trials, indicating that conventional DeePC can become unstable under data poisoning attacks.
The output-truncated DeePC (Fig.~\ref{fig:simulation_result}(b)) tracks the references well; however, since $ y_3 $ (an output in $ \U $) is not included in the optimization, it violates the safety bound on $ y_3$ (see around $ 5 $ and $ 15 $ seconds in the figure).
Secure DeePC (Fig.~\ref{fig:simulation_result}(c)) shows a small initial violation of the constraint on $ y_3 $ because that constraint is not imposed in Phase I.
However, once the online input data is PE, it tracks the references while respecting all constraints.
Therefore, Secure DeePC achieves MPC-equivalent performance and constraint satisfaction in finite time, demonstrating its resilience to poisoning attacks.

\section{Conclusion}
\label{section:Conclusion}
This paper investigated data-driven control in the presence of data poisoning attacks on partially unprotected outputs.
We first showed fundamental limitations: poisoning attacks cannot be detected and identified from the dataset alone, unprotected data are non-informative in a minimax sense, and hard safety constraints on unprotected outputs are not guaranteed.
Building on these limitations, we proposed Secure DeePC, a two-phase framework that is resilient to offline poisoning attacks.
We showed that Secure DeePC achieves MPC-equivalent guarantees in finite time, almost surely, under certain conditions.
Numerical simulations demonstrated the efficacy of Secure DeePC against offline poisoning attacks.

Future work will pursue secure control problems in the presence of both offline/online attacks.
Another important direction is to address noisy data, including robustness analysis.

\end{document}